\pgfplotsset{compat=1.14}
\algnewcommand\algorithmicforeach{\textbf{for each}}
\newtheorem{lemma}{Lemma}
\newtheorem{theorem}[lemma]{Theorem}
\newtheorem{corollary}[lemma]{Corollary}
\newtheorem{definition}[lemma]{Definition}
\newtheorem{informal theorem}[lemma]{Informal Theorem}
\newtheorem{remark}[lemma]{Remark}
\date{}
\title{Graph Spanners in the Message-Passing Model}
\author{
    Manuel Fernández V\\Carnegie Mellon University\\ \texttt{manuelf@andrew.cmu.edu} \and
    David P. Woodruff\\Carnegie Mellon University\\ \texttt{dwoodruf@cs.cmu.edu} \and
    Taisuke Yasuda\\Akuna Capital\\ \texttt{yasuda.taisuke1@gmail.com}
}
\begin{document}

\begin{titlepage}
\maketitle
\thispagestyle{empty}
\begin{abstract}
Graph spanners are sparse subgraphs which approximately preserve all pairwise shortest-path distances in an input graph. The notion of approximation can be additive, multiplicative, or both, and many variants of this problem have been extensively studied. We study the problem of computing a graph spanner when the edges of the input graph are distributed across two or more sites in an arbitrary, possibly worst-case partition, and the goal is for the sites to minimize the communication used to output a spanner. We assume the message-passing model of communication, for which there is a point-to-point link between all pairs of sites as well as a coordinator who is responsible for producing the output. We stress that the subset of edges that each site has is not related to the network topology, which is fixed to be point-to-point. While this model has been extensively studied for related problems such as graph connectivity, it has not been systematically studied for graph spanners. We present the first tradeoffs for total communication versus the quality of the spanners computed, for two or more sites, as well as for additive and multiplicative notions of distortion. We show separations in the communication complexity when edges are allowed to occur on multiple sites, versus when each edge occurs on at most one site. We obtain nearly tight bounds (up to polylog factors) for the communication of additive $2$-spanners in both the with and without duplication models, multiplicative $(2k-1)$-spanners in the with duplication model, and multiplicative $3$ and $5$-spanners in the without duplication model. Our lower bound for multiplicative $3$-spanners employs biregular bipartite graphs rather than the usual Erd\H{o}s girth conjecture graphs and may be of wider interest.
\end{abstract}
\end{titlepage}

\section{Introduction}
%
In modern computational settings, graphs are often stored in a distributed setting with edges living across multiple servers. This may happen when traditional, single-server methods for representing and processing massive graphs are no longer feasible and require parallel processing capability to complete. In other real world settings, different sites collect information in different locations, naturally leading to a computational setting with an input graph distributed across servers. For example,
the sites may correspond to sensor networks, different network servers,
etc.
%
Furthermore, the bottleneck in these settings is often in the \emph{communication} between the servers, rather than the computation time within each of the servers. Computing synopses of distributed graphs in a communication-efficient
manner has therefore become increasingly important.

We consider the problem of efficiently
constructing a {\it graph spanner} in the
message-passing model of communication. A graph spanner is a subgraph of
the input graph, for which shortest path distances
are approximately preserved in the subgraph. This property can immediately be used to approximately answer shortest path queries, diameter queries, connectivity queries, etc. 
Spanners have applications to internet routing  \cite{TZ01,c01,cw04,pu88},
using protocols in unsynchronized networks to simulate synchronized networks \cite{pu89a},
distributed and parallel algorithms for shortest paths \cite{c98,c00,e05},
and for constructing distance oracles \cite{tz05,bs04}.
There are various notions of approximation provided by a spanner,
such as {\it additive}, for which there is an integer $\beta \geq 1$ and one wants for all pairs $u,v$ of vertices, that $d_H(u,v) \leq d_G(u,v) + \beta$, as well
as {\it multiplicative}, in which case there is an integer $\alpha \geq 1$ and one wants for all pairs $u,v$ of vertices, that $d_H(u,v) \leq \alpha \cdot d_G(u,v)$.

\paragraph{Message-Passing Model.}

In the message-passing model (see, e.g., \cite{phillips2012lower,wz12,braverman2013tight,woodruffzhang13,HuangRVZ15}) there are $s$ players, denoted $P^{1}, P^{2}, \dots, P^{s}$, and each player holds part of the input. In our context, player $P^i$ holds a subset $E_i$ of a set of edges on a common vertex set $V$, and we define the graph $G$ with vertex set $V$ and edgeset $\bigcup_i E_i$. There are various input models, such as the {\it without duplication} edge model in which the $E_i$ are pairwise disjoint, and the {\it with duplication} edge model in which the $E_i$ are allowed to overlap.
In this model there is also a coordinator $C$ who is required to compute a function defined on the union of the inputs of the players.
The communication channels in this model are point-to-point. For example, if $C$ is communicating with $P^{i}$, then the remaining $s-1$ players do not see the contents of the message between $C$ and $P^{i}$.
We also do not allow the players to talk directly with each other; rather, all communication happens between the coordinator and a given player at any given time\footnote{This model only mildly increases the communication cost over a complete
point-to-point network in which each pair of players can communicate with
each other. Indeed, if $P^i$ wishes to speak to $P^j$, then $P^i$ can forward
a message through the coordinator who can send it to $P^j$.
Thus the communication increases by a multiplicative factor of $2$
and an additive $O(\log s)$
per message to specify where to forward the message. As these
factors are small in our context, we will focus on the coordinator model.}. The coordinator $C$ is responsible for producing the output.

The main resource measure we study
is the {\it communication complexity},
that is, the total number of bits required to be sent between the servers in order to output such a spanner with high probability.
While graph spanners have been studied in the offline model, as well as in various distributed models such as the CONGEST and LOCAL models, e.g., \cite{censor2016distributed,en17,dn17,gp17,Censor-HillelD18,ParterY18} as well as in the local computation algorithms model \cite{ParterRVY19}, they have not been systematically studied in the message-passing model.
The few related results we are aware of in the message-passing model
are given in \cite{woodruffzhang13}, where (1)
the problem of testing graph connectivity was studied, which can be viewed as a very
special case of a spanner, and (2) a result on additive $2$-spanners which we
discuss more and improve upon below.
There is also work in related models such as \cite{knpr15},
but such models require that
the edges be randomly distributed, which may not be a realistic assumption in certain
applications, e.g., if data is collected at sensors with different
input distributions.

We also study a variant of the communication complexity in the message-passing model known as the \emph{simultaneous communication complexity} for the multiplicative $(2k-1)$-spanner problem, in which each server is only allowed to send one round of communication to the coordinator \cite{BabaiGKL03, WeinsteinW15}.

\paragraph{Turnstile Streaming Model.}

Finally, we record some simple results in the \emph{turnstile streaming model}, in which the input graph is presented as a stream of insertion and deletion updates of edges. That is, we view our graph as an $\binom{n}{2}$-dimensional vector $x$ starting with the zero vector, and we receive updates of the form $(e_i, \Delta_i)\in[\binom{n}2]\times \{\pm 1\}$ and increment the $e_i$th entry of $x$ by $\Delta_i$. Our input graph is then the graph that has the edge $e$ iff $\sum_{i : e_i = e}\Delta_i > 0$. We assume that the input graph has no self-loops. In this model, we wish to design algorithms using low space and low number of passes through the stream. The study of graph problems in this model were pioneered by \cite{AhnGM12a} and were subsequently studied by many other works, including \cite{AhnGM12b, AhnGM13, kapralov2014spanners, KapralovLMMS17, kapralov2019dynamic}.

\subsection{Our Results}
\begin{table}
\centering
\begin{threeparttable}
\begin{tabular}{|c|c|c|c|c|} \hline
    &\multicolumn{2}{|c|}{ With duplication }& \multicolumn{2}{|c|}{Without duplication} \\ \hline
    Problem & LB & UB & LB &UB\\ \hline
    $+2$-spanner & $\Omega(s n^{3/2})$ & $\tilde{O}(s n^{3/2})$ & $\Omega(\sqrt{s}n^{3/2} + sn)$ & $\tilde{O}(\sqrt{s} n ^{3/2}+sn)$ \\ \hline
    $+4$-spanner & $\Omega(s n^{4/3})$ & $\tilde{O}(s n^{3/2})$ & $\Omega(s^{1/3}n^{4/3} + sn)$ & $\tilde{O}(\sqrt{s} n ^{3/2}+sn)$ \\ \hline
    $+k$-spanner & $\Omega(sn^{4/3})$ & $\tilde{O}(sn^{3/2})$ & $ \Omega(n^{4/3}  + sn)$ & $\tilde{O}(\sqrt{s/k} n ^{3/2}+snk)$ \\ \hline
    $\times 3$-spanner & $\Omega(sn^{3/2})$ & $\tilde O(sn^{3/2})$ & $\Omega(s^{1/2}n ^ {3/2} + s n)$ & $\tilde O(s^{1/2}n^{3/2} + sn)$ \\ \hline
    $\times 5$-spanner & $\Omega(sn^{4/3})$ & $\tilde O(sn^{4/3})$ & $\Omega(s^{1/3}n ^ {4/3} + s n)$ & $\tilde O(s^{1/3}n^{4/3} + sn)$ \\ \hline
    \makecell{$\times (2k-1)$-spanner, \\ $k\geq 3$}
    & $\Omega(sn^{1 + 1/k})$ & $\tilde O(sn^{1 + 1/ k})$ & $\Omega(s^{1/2-1/2k}n ^ {1 + 1/k} + s n)$ & $\tilde O(ks^{1-2/k}n^{1 + 1/ k}+snk)$ \\ \hline
    \makecell{$\times (2k-1)$-spanner, \\ (simultaneous)}
    & $\Omega(sn^{1+1/k})$ & $\tilde O(sn^{1+1/k})$ & $\Omega(sn^{1+1/k})$ & $\tilde O(sn^{1+1/k})$ \\ \hline
\end{tabular}
\caption{Our results.}\label{table:main}
\end{threeparttable}
\end{table}
We summarize our results in Table \ref{table:main}. Note that the
$\tilde{O}$ and $\tilde{\Omega}$ notation hides $\text{poly}(\log n)$ factors. Often our
upper bounds are stated in terms of edges, but since each edge
can be represented using $O(\log n)$ bits, we obtain the same
upper bound in terms of bits up to an $O(\log n)$ factor.
We study both the with duplication
and without duplication edge models, and in all cases we consider a worst-case
distribution of edges.

We give a number of communication
versus approximation quality tradeoffs for additive spanners and multiplicative spanners.
We describe each type of spanner we consider in the
sections below, together with the results that we obtain. We obtain qualitatively
different results
depending on whether edges are allowed to be duplicated across the players, or if
each edge is an input to exactly one player.

We point out some particular notable aspects of our results. First, we obtain nearly tight bounds (up to $\poly(\log n)$ factors) for the communication of additive $2$-spanners in both the with and without duplication models, multiplicative $(2k-1)$-spanners in the with duplication model, and multiplicative $3$ and $5$-spanners in the without duplication model. Second, in proving our tight lower bound for $3$-spanners in the without duplication model (\cref{thm:mult-3-lb}), we employ results from extremal graph theory on \emph{biregular bipartite graphs}, which, to the best of our knowledge, is the first explicit use of such graphs in the context of lower bounds for spanners. All other lower bounds that we are aware of are obtained from extremal graphs given by the Erd\H{o}s girth conjecture (e.g., lower bounds in the streaming \cite{Baswana08}, local computation algorithm \cite{ParterRVY19}, and distributed \cite{censor2016distributed,en17,dn17,gp17,Censor-HillelD18,ParterY18} models), and we believe that our use of biregular bipartite graphs may inspire tight lower bounds in other models in the future as well.

We note that our results slightly differ from traditional results on spanners, in that the sparsity of our spanner may be far from optimal. For instance, we show an algorithm for computing an additive $2$-spanner in the without duplication model with near-optimal communication complexity of $\tilde O(\sqrt{s}n^{3/2})$ bits of communication, but the size of this spanner is $\tilde O(\sqrt{s}n^{3/2})$ edges, which may be much larger than the optimal $O(n^{3/2})$ edges when the number of servers $s$ is very large. It is an interesting question to characterize the communication complexity of computing spanners of optimal size.

\subsubsection{Additive Spanners}
In the case of additive spanners, one is given an arbitrary graph $G$
on a set $V$ of $n$ vertices and an integer parameter $\beta \geq 1$,
and we want to output a subgraph $H$ containing as few edges as possible
so that $d_H(u,v) \leq d_G(u,v) + \beta$ for all pairs of vertices
$u,v \in V$. The first such spanner was constructed by Aingworth et al. \cite{aingworth1999fast},
which was slightly improved in \cite{dor2000all,ep04}. They showed, surprisingly,
that for $\beta = 2$, it is always possible to achieve $|H| = O(n^{3/2})$. The next
additive spanner was constructed in \cite{BKMP05}, where it was shown that for $\beta = 6$
one can achieve $O(n^{4/3})$ edges; see also \cite{woodruff2010additive} where the time
complexity was optimized. Recently, it was shown in \cite{c13} how to achieve an
additive spanner with $\tilde{O}(n^{7/5})$ edges for $\beta = 4$.
In a breakthrough work \cite{abboud20164},
an $\Omega(n^{4/3-o(1)})$ lower bound was shown for any constant $\beta$.

The one previous result we are aware of for computing spanners in the message-passing model is
for additive $2$-spanners given in \cite{woodruffzhang13}, for which an $\tilde{O}(sn^{3/2})$ upper bound
was given which works with edge duplication. We first show that with edge duplication,
the algorithm of \cite{woodruffzhang13} is {\it optimal}, by proving a matching $\Omega(sn^{3/2})$ lower bound. Our
lower bound is a reduction from the $s$-player set disjointness problem \cite{braverman2013tight}.
We next consider the
case when there is no edge duplication, and perhaps surprisingly, show that one can achieve an additive $2$-spanner with $\tilde{O}(\sqrt{s} n^{3/2})$ communication, improving upon the $\tilde{O}(sn^{3/2})$
bound of \cite{woodruffzhang13}, and given our lower bound in the case of edge duplication, providing a separation
for additive spanners in the models with and without edge duplication. Our upper bound is based on observing
that the dominant cost in implementing additive spanner algorithms in a distributed setting is that of
performing a breadth-first search. We instead perform fewer breadth first searches to obtain a better overall
communication cost than one would obtain by na\"ively implementing an offline additive spanner algorithm,
as is done in \cite{woodruffzhang13}. This algorithm is the starting point for our technically
more involved upper bound, where we show
that it is possible to obtain an additive $k$-spanner with $\tilde{O}(\sqrt{s/k}n^{3/2} + snk)$ total
communication. We complement this result with a lower bound of $\Omega(sn^{4/3-o(1)})$ for this problem.

We note that we are not able to obtain constant additive spanners with fewer than $n^{3/2}$ edges,
as the dominant cost comes from having to do breadth first search trees, which is communication-intensive
in the message-passing model. We conjecture that
$\Theta(n^{3/2})$
may be the optimal communication bound for any additive spanner with constant distortion, unlike in the offline
model where an $O(n^{4/3})$ edge bound is achievable.

\subsubsection{Multiplicative Spanners}
In the case of multiplicative spanners,
we are given an arbitrary graph $G$ on a set $V$ of
$n$ vertices and an integer parameter
$\alpha \geq 1$, and wish to output a subgraph $H$ containing as few edges
as possible so that $d_H(u,v) \leq \alpha \cdot d_G(u,v)$ for all pairs of vertices
$u,v \in V$. For odd integers $\alpha = 2k-1$, for any graph $G$ on
$n$ vertices there exists
a $\alpha$-spanner with $O(n^{1+1/k})$ edges, for any integer $k \geq 1$ \cite{a85}.
Further,
this is known to be optimal for $k \in \{1, 2, 3, 5\}$ \cite{Tits59, w91}, while
for general $k$ the best known bounds are $\Omega(n^{1+2/(3k-3)})$ for
odd $k$ and $\Omega(n^{1+2/(3k-2)})$ for even $k$ \cite{l95,l96}.

Under a standard conjecture of Erd\H{o}s \cite{e65}, this bound of $O(n^{1+1/k})$ is in fact
optimal for every $k$. Recall that the girth of an unweighted graph
is the minimum length cycle in the graph. Erd\H{o}s's conjecture is
that there exist graphs $G$
with $\Omega(n^{1+1/k})$ edges for which the girth is $2k+2$. Note that
given such a $G$, if one were to delete any edge $\{u,v\}$ in $G$,
then the distance from $u$ to $v$ would increase from $1$ to
$2k+1$, and therefore $G$ is the only $2k-1$-spanner of itself,
giving the $\Omega(n^{1+1/k})$ edge lower bound. Notice that $G$ is also
the only $2k$-spanner of itself, and so the $\Omega(n^{1+1/k})$ lower bound
also holds for even integers $\alpha = 2k$, which is also optimal since,
as mentioned above, there always exist $(2k-1)$-spanners with $O(n^{1+1/k})$ edges.

\paragraph{Message-Passing Model.}

We show that for computing a multiplicative $(2k-1)$-spanner
with $s$ players, in the edge model with duplication on $n$-node graphs,
there is an $\Omega(s \cdot OPT_k)$ communication lower bound, where $OPT_k$ is the maximum size
of a $(2k-1)$-spanner of any graph. Our lower bound is again based on a reduction from the multiplayer
set disjointness communication problem. 
A greedy algorithm shows that this bound is optimal, that is, we provide a matching $\tilde O(s \cdot OPT_k)$ upper bound.

If instead each edge occurs on exactly one server, note that the additive $2$-spanner algorithm already gives a separation in the $s$ parameter by providing a $\tilde O(\sqrt{s}n^{3/2}+sn)$ algorithm. We show that this is optimal up to polylog factors by showing a lower bound of $\Omega(\sqrt{s}n^{3/2})$ for multiplicative $3$-spanners. This then gives near optimal lower bounds for additive $2$-spanners as well. Our lower bound here uses for the first time, to the best of our knowledge, the theory of \emph{biregular bipartite cages}, which may be of wider interest. For $k\geq 3$, we again show that there is a separation in the $s$ parameter between the models with and without edge duplication, by showing that carefully balancing the complexity of a lesser known variant of the classic algorithm of \cite{baswana2007simple}, the cluster-cluster joining variant, can be implemented to use only $\tilde O(ks^{1-2/k}n^{1+1/k}+snk)$ communication. We complement this result with a lower bound of $\Omega(s^{1/2-1/2k}n^{1+1/k}+sn)$ communication via a reduction from the edge model with duplication, essentially by splitting vertices to transform the input instance with duplication into one without duplication. This bound is off by a factor of $O(s^{1/2-3/2k})$. For $k=3$, the exponent on $s$ is exactly correct, giving a nearly tight characterization of $\tilde\Theta(s^{1/3}n^{4/3})$ communication for the problem of computing multiplicative $5$-spanners.

\paragraph{Simultaneous Communication.}

In the simultaneous communication model, we show an upper bound of $\tilde O(sn^{1+1/k})$ in the with duplication model and a lower bound of $\Omega(sn^{1+1/k})$ without duplication model under the Erd\H{o}s girth conjecture, showing that the complexity is $\tilde\Theta(sn^{1+1/k})$ in all cases. The upper bound simply comes from locally computing a multiplicative $(2k-1)$-spanner of size $\Theta(n^{1+1/k})$ at each server, while the lower bound comes from constructing $s$ edge-disjoint graphs on $n$ vertices and $\Omega(n^{1+1/k})$ edges, a constant fraction of which must be sent to the server in the simultaneous communication model, as we show.

\paragraph{Turnstile Streaming Model.}

Finally, we note that implementing the cluster-cluster joining algorithm of \cite{baswana2007simple} in the turnstile streaming model gives an algorithm for computing a multiplicative $(2k-1)$-spanner with $(\floor{k/2}+1)$ passes and $\tilde O(n^{1+1/k})$ space. Our algorithm follows the techniques of \cite{AhnGM12b}, but implements a different version of the Baswana-Sen algorithm than they do, which allows us to save on the number of passes. Previously, in the regime of a small constant number of passes, \cite{kapralov2014spanners} gave an algorithm for computing multiplicative spanners with distortion $2^k$ in $\tilde O(n^{1+1/k})$ space with two passes. Our result improves upon this in the distortion for $k = 3$, achieving an optimal space-distortion tradeoff.

 \section{Preliminaries}
We use $[n]$ to denote $\{1, \dots, n\}$. We often use capital letters  $X$, $Y$, $\dots$ for sets, vectors, or random variables, and lower case letters $x$, $y$, $\dots$ for specific values of the random variables $X$, $Y$, $\dots$.
For a set $S$, we use $|S|$ to denote the size of $S$.

As for messages and communication, we assume that all communication is measured in terms of bits.
All logarithms in this paper are base 2.

We make use of the Set Disjointness problem in the message-passing model,
see, e.g., \cite{chattopadhyay2010story}.

\begin{definition}[$\mathrm{DISJ}_{n,s}$]
There are $s$ players and each of them holds a set $X_{i} \subseteq [n]$, and the goal is to determine whether $\bigcap_{i = 1} ^{s} X_{i}$ is empty or not.
\end{definition}
Recently in \cite{braverman2013tight}, the authors obtained a tight lower bound for this problem.
\begin{theorem}[\cite{braverman2013tight}, Theorem 1.1]
	\label{thm:setdisjointness}
	For every $\delta > 0$, $n \geq 1$ and $ s = \Omega (\log n)$, the randomized communication complexity of
set disjointness in the message-passing model is $\Omega(sn)$ bits.
That is, for every randomized protocol which succeeds with probability
at least $2/3$ on any given set of inputs, there exists a set of inputs
and random coin tosses of the players which causes the sum of message
lengths of the protocol to be $\Omega(sn)$ bits. Further, for any $s \geq 2$, the randomized
communication complexity of set disjointness is $\Omega(n)$ (\cite{ks92}).
\end{theorem}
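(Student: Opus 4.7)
The plan is to establish the $\Omega(sn)$ lower bound via an information-complexity argument in the coordinator model, extending the framework developed by Bar-Yossef, Jayram, Kumar, and Sivakumar for two-party disjointness to the $s$-player setting.

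First I would define the internal information cost of a protocol $\pi$ with respect to a distribution $\mu$ on inputs as $\mathrm{IC}_\mu(\pi) := \sum_{i=1}^{s} I(X_{-i}; \Pi_i \mid X_i, R)$, where $\Pi_i$ is the transcript on the point-to-point channel between the coordinator and player $i$, $X_{-i}$ denotes the inputs of the remaining players, and $R$ is public randomness. A standard cut-and-paste argument shows $\mathrm{IC}_\mu(\pi) \leq |\pi|$, so it suffices to lower-bound $\mathrm{IC}_\mu(\pi)$ under a well-chosen hard distribution.

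Next I would apply a direct-sum reduction. Choose a ``collapsing'' distribution $\mu$ on $\{0,1\}^s$ that is (i) supported on inputs whose bitwise \textsc{and} equals $0$, yet (ii) places substantial mass on each ``near all-ones'' input so that detecting the all-ones input is genuinely $s$-player hard. Because any $\mathrm{DISJ}_{n,s}$ protocol running on the product distribution $\mu^n$ must implicitly solve each of the $n$ coordinate-wise $\mathrm{AND}_s$ instances, a standard embedding plus chain-rule calculation yields $\mathrm{IC}_{\mu^n}(\mathrm{DISJ}_{n,s}) \geq n \cdot \mathrm{IC}_\mu(\mathrm{AND}_s)$.

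The technical heart of the argument, and the main obstacle, is the single-coordinate bound $\mathrm{IC}_\mu(\mathrm{AND}_s) = \Omega(s)$. The point-to-point structure is crucial here: because messages to player $i$ are invisible to player $j$, the transcript $\Pi_i$ records only the history of the $i$th channel, so players cannot pool evidence ``for free'' as they could on a shared blackboard. Using a symmetrization argument that averages over random permutations of player identities, together with a Hellinger-distance primitive lemma, one shows that distinguishing the all-zeros input from each single-one input requires $\Omega(1)$ information to flow along every channel, summing to $\Omega(s)$.

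Combining the two bounds gives $|\pi| \geq \mathrm{IC}_{\mu^n}(\pi) = \Omega(sn)$, proving the main claim. The hypothesis $s = \Omega(\log n)$ enters when converting bounded-error into distributional hardness and when absorbing the $O(\log n)$ per-message routing overhead intrinsic to the coordinator model. The weaker $\Omega(n)$ bound valid for all $s \geq 2$ is immediate from the two-party lower bound of \cite{ks92}: designate two players to simulate a two-party $\mathrm{DISJ}_n$ instance while the remaining $s-2$ players hold trivial all-ones inputs, so that the $s$-player intersection coincides with the two-party intersection.
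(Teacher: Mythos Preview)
The paper does not prove this theorem; it is quoted verbatim as an external result from \cite{braverman2013tight} (with the $\Omega(n)$ clause attributed to \cite{ks92}) and is used purely as a black box in the reductions that follow. There is therefore no ``paper's own proof'' to compare against.

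As for your sketch itself: the high-level architecture you describe---information cost lower-bounds communication, a direct-sum reduction to a single-coordinate $\mathrm{AND}_s$ problem, and an $\Omega(s)$ information bound for that primitive---is indeed the skeleton of the argument in \cite{braverman2013tight}. Two points are worth flagging. First, your description of the primitive step is inverted: for $\mathrm{AND}_s$ the hard task is distinguishing the all-\emph{ones} input (where $\mathrm{AND}=1$) from the $s$ inputs with exactly one zero (where $\mathrm{AND}=0$), not ``all-zeros versus single-one'' as you wrote. Second, the phrase ``standard cut-and-paste argument'' for the inequality $\mathrm{IC}_\mu(\pi)\le |\pi|$ is a misnomer; that inequality is just the data-processing/entropy bound, whereas cut-and-paste refers to the rectangle/Hellinger technique you invoke later. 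The actual $\Omega(s)$ single-coordinate bound in \cite{braverman2013tight} is the genuinely delicate part and is not captured by the one-sentence summary you give, but since the paper treats the whole theorem as a citation, no further detail is expected here.
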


Let $G=(V,E)$ be an undirected graph, where $V$ is the vertex set and $E$ is the edgeset.
Let $n = |V|$ and $m = |E|$ denote the number of vertices and the number of edges, respectively.
For a pair of vertices $u,v$ in $G$, the distance between $u$ and $v$ is denoted by $d_{G}(u,v)$, which
indicates the length of the shortest path connecting $u$ to $v$.
The results in this paper are for unweighted graphs, thus the length of a path is equal to the number of edges is contains.

In the message-passing model, we have $s$ players.
We suppose each player knows the entire vertex set $V$ and a subset $E_i$ of the input
graph,
where $E_{i}$ is a subset of the edge set $E$.
We can think of each player $P_{i}$ having a bit vector $Y_{i}$, which is a vector of length $m$.
We number the edges with $1$, $2, \ldots, m$.
If $Y_{i,j} = 0$, it indicates that the $j$-th edge is missing in $P_{i}$.
If $Y_{i,j} = 1$, it indicates that the $j$-th edge is present in $P_{i}$.

We study two models: allowing edge duplication and not allowing it.
In the model {\it with duplication}, edgesets can overlap.
For most problems with duplication, we will obtain lower bounds on their communication
via a reduction from the Set Disjointness problem.
In the model {\it without duplication}, all edge sets are disjoint,
that is, $\forall i, j \in [s], i \neq j, E_{i} \cap E_{j} = \varnothing$.

We will also assume in this paper that $s \ll n$, e.g.\ $s = O(n^\eps)$ for a small constant $\eps$: this is typically the case in practice, as well as the interesting regime for most of our bounds.

\section{Additive Spanners}
In this section we study how to compute additive spanners of graphs in the message-passing model. Recall the definition of additive spanners.
\begin{definition}[Additive spanners]
Given a graph $G$, a subgraph $H$ is an \emph{additive $\beta$-spanner for $G$} if for all $u, v \in V$, $d_{G}(u,v) \leq d_{H}(u,v) \leq d_{G}(u,v) + \beta,$ where $d_{G}(u,v)$ and $d_{H}(u,v)$ are lengths of the shortest paths in $G$ and $H$, respectively.
\end{definition}


\subsection{Additive \texorpdfstring{$2$}{2}-Spanners with Duplication}

As a warmup, we first consider the case when $\beta = 2$, and edge duplication is allowed. A large fraction of our proofs will follow this paradigm.

\begin{theorem} \label{thm:42}
	The optimal communication cost of the additive $2$-spanner problem with edge duplication in the message passing model is $\tilde{\Theta}(s n^{3/2})$ bits.
\end{theorem}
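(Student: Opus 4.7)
The plan is to prove the matching upper and lower bounds separately. For the $\tilde O(sn^{3/2})$ upper bound I would invoke (or re-derive) the protocol of \cite{woodruffzhang13}, which distributively simulates the classical Aingworth-Chekuri-Goldberg-Zwick additive $2$-spanner construction. With degree threshold $\sqrt n$, the coordinator first learns the global degree of each vertex (using $\tilde O(sn)$ bits of sketching), then collects all edges incident to low-degree vertices ($\tilde O(n^{3/2})$ bits total), and finally runs BFS from a randomly sampled dominating set of $\tilde O(\sqrt n)$ high-degree vertices; each BFS in the message-passing model costs $\tilde O(sn)$ bits since each player forwards one edge per vertex per level, giving a total of $\tilde O(sn^{3/2})$.

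The main task is the matching lower bound of $\Omega(sn^{3/2})$. My plan is to reduce from $\mathrm{DISJ}_{N,s}$ with $N = \Theta(n^{3/2})$, which by \cref{thm:setdisjointness} requires $\Omega(sN) = \Omega(sn^{3/2})$ bits. The key gadget is an $n$-vertex graph $G_0$ of girth at least $5$ with $N = \Theta(n^{3/2})$ edges, supplied by standard extremal constructions such as incidence graphs of projective planes (bipartite and $C_4$-free, hence of girth $6$). Label the edges $e_1, \ldots, e_N$, with both the labeling and $G_0$ known to every player.

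Given a disjointness instance in which player $P_i$ holds $X_i \subseteq [N]$, the reduction has $P_i$ construct its local edge set $E_i := \{e_j : j \notin X_i\}$ with no communication. For each $j$, the edge $e_j$ lies in the union input graph $G = (V, \bigcup_i E_i)$ iff $j \notin \bigcap_i X_i$; thus $G = G_0$ exactly when the $X_i$ are disjoint and $G \subsetneq G_0$ otherwise. The girth-$5$ property is the crux: for any edge $\{u,v\}$ of $G_0$, removing it forces $d(u,v) \geq 4 > 1 + 2$, so every edge of $G_0$ must appear in any additive $2$-spanner of $G_0$. Hence the coordinator's spanner $H$ satisfies $|H| = N$ iff disjointness holds, so any $T$-bit spanner protocol yields a $T$-bit $\mathrm{DISJ}_{N,s}$ protocol, and the $\Omega(sn^{3/2})$ lower bound follows.

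The main obstacle I anticipate is administrative rather than deep: extremal girth-$5$ graphs with $\Theta(n^{3/2})$ edges exist only for specific values of $n$ (e.g.\ prime-power-shifted sizes for projective planes), so I would pad with isolated vertices to obtain a graph on exactly $n$ vertices up to constants, and ensure the regime $s = \Omega(\log n)$ required by \cref{thm:setdisjointness} is consistent with the standing assumption $s \ll n$. On the upper bound side, the only subtle piece is arguing that global degrees and dominating-set membership can be communicated in $\tilde O(sn)$ bits, but this is routine via standard linear sketches for set union and $F_0$ estimation.
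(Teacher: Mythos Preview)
Your proposal is correct and follows essentially the same approach as the paper: the lower bound is the same set-disjointness reduction via a high-girth extremal graph (the paper packages this as \cref{lemma:propertysubgraph} applied to \cref{lemma:6cyclefree}, but the encoding $E_i = \{e_j : j \notin X_i\}$ and the ``spanner equals full graph iff disjoint'' test are identical), and the upper bound is the same distributed simulation of the Dor--Halperin--Zwick construction that the paper cites from \cite{woodruffzhang13}. One minor slip: in the duplication model, collecting the low-degree edges costs $\tilde O(sn^{3/2})$ rather than $\tilde O(n^{3/2})$ since each such edge may live on every server, but this is harmless since the BFS phase already dominates at $\tilde O(sn^{3/2})$.
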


The following lemma is well known.

\begin{lemma} \label{lemma:6cyclefree}
For every $n$, there is a family of graphs on $n$ vertices with $\Theta(n ^{3/2})$ edges and girth at least $6$.
\end{lemma}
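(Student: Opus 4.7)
The plan is to realize this via the classical construction from finite geometry, namely the incidence (Levi) graph of a finite projective plane. For every prime power $q$, the Desarguesian projective plane $PG(2,q)$ obtained from the one-dimensional subspaces and two-dimensional subspaces of $\mathbb{F}_q^3$ has exactly $q^2+q+1$ points and $q^2+q+1$ lines, with every line containing $q+1$ points and every point lying on $q+1$ lines, and with the property that any two distinct points determine a unique line (and, dually, any two distinct lines meet in a unique point).

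From such a plane I would form the bipartite graph $G_q$ with one side indexed by points, the other by lines, and an edge $\{p,\ell\}$ whenever $p\in\ell$. Counting immediately gives $|V(G_q)|=2(q^2+q+1)=\Theta(q^2)$ and $|E(G_q)|=(q^2+q+1)(q+1)=\Theta(q^3)=\Theta(|V(G_q)|^{3/2})$, which is the target density. For the girth bound, bipartiteness rules out odd cycles, so there are no $3$- or $5$-cycles. A $4$-cycle would consist of two distinct points $p_1,p_2$ each incident to two distinct lines $\ell_1,\ell_2$, contradicting the axiom that two distinct points lie on a unique common line. Hence the girth of $G_q$ is at least $6$ (and in fact equal to $6$, witnessed by any three noncollinear points together with the three lines through their pairs).

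To produce a graph on exactly $n$ vertices for an arbitrary $n$, I would invoke Bertrand's postulate to select a prime $q$ with $\sqrt{n/3}\le q\le\sqrt{n/2}$ for $n$ sufficiently large, so that $2(q^2+q+1)\le n$ and $2(q^2+q+1)=\Theta(n)$; padding with isolated vertices to reach $n$ total does not affect the girth and leaves the edge count at $\Theta(q^3)=\Theta(n^{3/2})$. Small $n$ can be handled by a direct finite construction.

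The main obstacle is essentially notational rather than mathematical: verifying that the projective plane axiom on unique lines through two points cleanly rules out $4$-cycles in the incidence graph. Once that observation is in hand, the vertex/edge counts and the reduction to arbitrary $n$ via a prime in the right window are routine.
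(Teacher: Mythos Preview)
Your argument is correct and takes a genuinely different route from the paper's. The paper first cites F\"uredi's extremal result to obtain a $C_4$-free graph on $n$ vertices with $\Theta(n^{3/2})$ edges, and then applies a random red/blue two-coloring of the vertices, keeping only bichromatic edges; the resulting bipartite subgraph is still $C_4$-free, hence also $C_5$-free, and the expectation argument shows some coloring retains $\Theta(n^{3/2})$ edges. Your construction via the Levi graph of $PG(2,q)$ is more direct: the graph is already bipartite and $C_4$-free by the incidence axioms, so no probabilistic bipartition step is needed, and the girth is visibly exactly $6$. The paper's approach is more modular (it black-boxes the extremal result), while yours is more explicit and self-contained; at a deeper level the two are cousins, since F\"uredi's extremal $C_4$-free graphs are themselves derived from polarity graphs of projective planes.

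One small slip: the interval $[\sqrt{n/3},\sqrt{n/2}]$ has endpoint ratio $\sqrt{3/2}<2$, so Bertrand's postulate alone does not guarantee a prime there. But you only need $q=\Theta(\sqrt n)$ with $2(q^2+q+1)\le n$, so any interval of ratio $2$ such as $[\tfrac14\sqrt n,\tfrac12\sqrt n]$ works just as well; this is a cosmetic fix and does not affect the argument.
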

\begin{proof}
	By a special case of Theorem 2 from \cite{furedi1994quadrilateral}, there is a family of graphs with $\frac{\sqrt{2}}{2} n^{3/2} + \Omega(n^{4/3}) = \Theta(n^{3/2})$ edges containing no $4$-cycles.

	Color vertices red and blue uniformly at random, and keep only edges between red and blue vertices. The resulting graph is bipartite and does not contain cycles of length $4$, therefore it does not have cycles of length $5$ either. Thus it has girth at least $6$. Since we keep half of the edges in expectation, there exists a graph with $\Theta(n^{3/2})$ edges with girth at least $6$.
\end{proof}

We also show the following very general lemma which we will make use of several times:

\begin{lemma}\label{lemma:propertysubgraph}
	Let $R$ be a binary relation between graphs and members of a set $\mathcal{P}$. Suppose there is a family of graphs $\{G_n\}_{n}$ such that $G_n$ has $n$ vertices and $f(n)$ edges, and:
	\begin{enumerate}
		\item $p_n$ is the unique member of $\mathcal{P}$ with $(G_n, p_n) \in R$
		\item for any proper subgraph $H$ of $G_n$, $(H, p_n) \not \in R$
	\end{enumerate}
	Then for a graph $G$ on $n$ vertices, the communication complexity in the edge duplication case of computing $p$ such that $(G, p) \in R$ is $\Omega(sf(n))$ bits.
\end{lemma}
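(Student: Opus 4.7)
The plan is to reduce from the multi-player Set Disjointness problem $\mathrm{DISJ}_{f(n),s}$ over a universe of size $f(n)$, invoking \cref{thm:setdisjointness} which gives an $\Omega(sf(n))$ lower bound on the communication complexity (valid once $s = \Omega(\log f(n))$, which in our regime $s \ll n$ is satisfied for any $f(n) \geq n$).

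First, I would fix an arbitrary bijection between $[f(n)]$ and the edge set of $G_n$, writing the edges as $e_1, \ldots, e_{f(n)}$. Given an instance of $\mathrm{DISJ}_{f(n),s}$ where player $i$ holds $X_i \subseteq [f(n)]$, I define player $i$'s edgeset to be
\[
E_i \defeq \{e_j : j \notin X_i\},
\]
so that $E_i$ records the edges that player $i$ does \emph{not} claim to be missing. Note that this is a valid input in the with-duplication model since the $E_i$ may overlap. The resulting input graph is $G = (V(G_n), \bigcup_i E_i)$, and by construction $e_j \in E(G)$ iff there exists some $i$ with $j \notin X_i$, i.e.\ iff $j \notin \bigcap_i X_i$. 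Equivalently, $G = G_n$ exactly when $\bigcap_i X_i = \varnothing$, while $G$ is a proper subgraph of $G_n$ whenever $\bigcap_i X_i \neq \varnothing$.

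Now the players run the hypothetical communication protocol that outputs some $p$ with $(G,p) \in R$, and at the end the coordinator compares $p$ to the (publicly known) value $p_n$. In the disjoint case, $G = G_n$ and uniqueness (property 1) forces $p = p_n$. In the non-disjoint case, $G$ is a proper subgraph of $G_n$, and property 2 guarantees $(G, p_n) \notin R$, so the protocol must output $p \neq p_n$. Thus the coordinator decides disjointness without any additional communication, and any protocol solving the graph problem must use $\Omega(sf(n))$ bits.

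I do not foresee a real obstacle here; the main thing to be careful about is that the reduction uses $f(n)$ as the universe size for Set Disjointness (not $n$), so one needs $s = \Omega(\log f(n))$ to apply \cref{thm:setdisjointness}, and the fallback $\Omega(f(n))$ bound from \cite{ks92} covers the remaining regime up to an $s$ factor. It is also worth noting explicitly that the reduction is deterministic and uses no communication beyond the protocol itself, so success probability is preserved and the lower bound transfers verbatim from the randomized communication complexity of Set Disjointness.
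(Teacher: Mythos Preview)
Your proposal is correct and follows essentially the same reduction from multi-player Set Disjointness as the paper: encode edge $e_j$ as present for player $i$ iff $j \notin X_i$, so that the union graph equals $G_n$ precisely when the intersection is empty, and then use properties (1) and (2) to read off the answer from whether the protocol outputs $p_n$. Your write-up is in fact slightly more careful than the paper's in tracking the direction of the equivalence and in noting the $s = \Omega(\log f(n))$ hypothesis needed to invoke \cref{thm:setdisjointness}.
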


For concreteness, in this example we may think of $\mathcal{P}$ as the set of all graphs, and define $R$ to be the set of pairs $(G,S)$ such that $S$ is an additive 2-spanner of $G$.

\begin{proof}
	We reduce from the set disjointness problem in the message-passing model. Given an instance of set disjointness with $s$ players each holding $X_i \subseteq[f(n)]$, we create a graph $G_n$ on $n$ vertices. We give player $i$ the edge indexed by $j$ if $j \not \in X_i$. If the coordinator outputs $p = p_n$, we output that $\bigcap_i X_i \neq \varnothing$, otherwise we output that $\bigcap_i X_i = \varnothing$. The coordinator outputs $p_n$ if and only if all the edges of $G_n$ are present among the players, which is the case if and only if $\bigcap_i X_i \neq \varnothing$. Therefore this procedure correctly decides set disjointness. Theorem \ref{thm:setdisjointness} implies a $\tilde{\Omega}(sf(n))$ bit lower bound for the communication cost of computing $p$.
\end{proof}

Together, these pieces yield the following:
\begin{proof}[Proof of Theorem \ref{thm:42}]
	For the lower bound, we observe that for a graph $G$ as in Lemma \ref{lemma:6cyclefree} removing any edge $(u,v)$ increases the distance from $u$ to $v$ to at least 5, and thus the only additive-$2$ spanner of $G$ is $G$ itself. By Lemma \ref{lemma:propertysubgraph} with $\mathcal{P}$ as the set of all graphs and $R$ as the set of pairs $(G,H)$ such that $H$ is an additive 2-spanner of $G$, we immediately have that the communication cost of finding a subgraph that is an additive $2$-spanner is $\tilde{\Omega}(s|E(G)|) = \tilde{\Omega}(sn^{3/2})$.

	For the upper bound, one can show that the well known algorithm of \cite{dor2000all} for computing additive $2$-spanners can be implemented in the message passing model with $\tilde{O}(s n^{3/2})$ bits of communication, even in the case of edge duplication. See the proof of Theorem 5 of \cite{woodruffzhang13} for details.
\end{proof}

\subsection{Additive \texorpdfstring{$k$}{k}-Spanners with Duplication}
Unfortunately, we are not able to design algorithms with improved communication over the above additive $2$-spanners even if we allow for larger additive distortion, despite the existence of algorithms for additive $6$-spanners that achieve $O(n^{4/3})$ edges \cite{BKMP05, woodruff2010additive}. In this section, we show a lower bound of $\Omega(sn^{4/3-o(1)})$ on the communication of additive $k$-spanners via a similar argument to the lower bound in \cref{thm:42}.

\begin{theorem}\label{thm:additive-lb}
The randomized communication complexity of the additive $k$-spanner problem with edge duplication is $\Omega(sn^{4/3-o(1)})$.
\end{theorem}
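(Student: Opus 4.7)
The plan is to mimic the approach used for \cref{thm:42}, but replace the girth-$6$ graph family from \cref{lemma:6cyclefree} with the extremal additive-spanner graphs from the lower-bound construction of Abboud and Bodwin \cite{abboud20164}, and then invoke the black-box reduction in \cref{lemma:propertysubgraph}.

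First, I would recall (and reformulate) the Abboud--Bodwin lower bound in the form we need: for every constant $k$ (and, in fact, for every $k = n^{o(1)}$), there is a family of graphs $\{G_n\}$ on $n$ vertices with $|E(G_n)| = n^{4/3-o(1)}$ such that every edge of $G_n$ is \emph{critical} for the additive $k$-spanner property, i.e., for each $e = (u,v) \in E(G_n)$ we have $d_{G_n - e}(u,v) > d_{G_n}(u,v) + k$. Equivalently, $G_n$ is the unique additive $k$-spanner of itself among its subgraphs. Although their headline statement only says ``every additive $k$-spanner of $G_n$ has $\Omega(n^{4/3-o(1)})$ edges,'' their construction pairs each edge with a distinguished vertex pair whose distance blows up when that edge is removed, and this is exactly the critical-edge property we need.

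Given this, I would apply \cref{lemma:propertysubgraph} with $\mathcal{P}$ taken to be the collection of all graphs on vertex set $V(G_n)$, the relation $R$ taken to be $R = \{(G,H) : H \text{ is an additive $k$-spanner of } G\}$, the distinguished element $p_n = G_n$, and $f(n) = n^{4/3-o(1)}$. Both hypotheses of the lemma are met: $G_n$ is an additive $k$-spanner of itself, and by the critical-edge property any proper subgraph $H \subsetneq G_n$ fails to be an additive $k$-spanner of $G_n$. The lemma then converts a set-disjointness instance of size $|E(G_n)|$ into an instance of computing an additive $k$-spanner (via the same ``edge $j$ is given to player $i$ iff $j \notin X_i$'' gadget used in \cref{thm:42}), and \cref{thm:setdisjointness} yields the claimed $\Omega(s \cdot n^{4/3-o(1)})$ communication lower bound.

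The main obstacle is not the reduction itself, which is essentially a verbatim repetition of the warm-up argument, but rather justifying that \cite{abboud20164} gives the \emph{per-edge} criticality we need (as opposed to only the aggregate statement about every spanner being large). Once this is extracted from their proof, the rest of the argument is immediate. As a sanity check, note that this is the right shape of answer: the $n^{4/3}$ comes from the extremal graph, the factor of $s$ comes from \cref{thm:setdisjointness}, and the $n^{o(1)}$ slack comes from the fact that Abboud--Bodwin's construction is itself tight only up to $n^{o(1)}$ factors.
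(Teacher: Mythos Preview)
Your outline is on the right track—reduce from multiparty set disjointness via the Abboud--Bodwin extremal graph—but the specific property you claim to extract from \cite{abboud20164} is not there, and this is exactly the step where the paper's proof diverges from a straight application of \cref{lemma:propertysubgraph}.

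The Abboud--Bodwin graph does \emph{not} have per-edge criticality. What it has is $m = n^{4/3 - o(1)}$ special vertex pairs $\{s,t\}$, each associated with a pairwise-disjoint \emph{clique edge set} $C^{s,t}$ (their Claim~3), together with the guarantee that every additive $k$-spanner must contain \emph{at least one} edge from each $C^{s,t}$ (their Claim~5). A single edge $e \in C^{s,t}$ is not critical: removing $e$ alone need not increase $d(s,t)$ beyond the additive-$k$ threshold, because any other edge of $C^{s,t}$ still supplies a short $s$--$t$ path. So the hypothesis ``for any proper subgraph $H$ of $G_n$, $(H,p_n)\notin R$'' of \cref{lemma:propertysubgraph} fails, and your proposed direct invocation of that lemma does not go through.

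The paper's ``minor modification'' is precisely to change the granularity of the set-disjointness encoding: the universe has size $m$ (the number of special pairs), not $|E(G_n)|$, and the $j$-th element corresponds to the entire set $C^{s,t}$ for the $j$-th pair. Player $i$ receives all of $C^{s,t}$ iff $j\notin X_i$; the non-clique edges are handed to the coordinator unconditionally. Then the output spanner intersects every $C^{s,t}$ iff every $j$ is missing from some $X_i$, i.e., iff $\bigcap_i X_i = \varnothing$. Since $m = n^{4/3 - o(1)}$, \cref{thm:setdisjointness} gives the $\Omega(s n^{4/3 - o(1)})$ bound. You correctly flagged ``justifying per-edge criticality'' as the main obstacle; the resolution is not to assert that property but to sidestep it by bundling edges into clique edge sets in the reduction.
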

\begin{proof}
The proof follows essentially from applying \cref{lemma:propertysubgraph} on the extremal graph of \cite{abboud20164}, with minor modifications. The details are deferred to \cref{section:additiveNoDupLB}.
\end{proof}

\subsection{Additive \texorpdfstring{$2$}{2}-Spanners without Duplication}
We next show how to improve the upper bound of Theorem \ref{thm:42} when edges are not duplicated across servers. We note that we can assume all servers know the degree of every vertex, since this involves exchanging at most $n$ numbers per player or $O(ns \log n)$ bits of communication. This is negligible compared to the rest of the communication assuming $s \ll n$.

First we write down some simple lemmas that we will make use of multiple times. The proofs of these can be found in \cref{section:simple-lemmas}.
\begin{lemma}\label{lemma:samplecover}
	Let $\mathcal{C}$ be a collection of sets over a ground set $\mathcal{U}$ each of size at least $t$. If we sample $\frac{|\mathcal{U}|}{t} \log |\mathcal{C} / \delta|$ elements from $\mathcal{U}$ uniformly with replacement, with probability at least $1-\delta$ we sample at least one element from each set in $\mathcal{C}$.
\end{lemma}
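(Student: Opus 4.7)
The plan is to apply a standard coupon-collector-style argument that combines an exponential tail bound with a union bound over the collection. First, I would fix an arbitrary set $S \in \mathcal{C}$ and bound the probability that none of the $N := (|\mathcal{U}|/t)\log(|\mathcal{C}|/\delta)$ uniform samples lands in $S$. Since $|S| \geq t$, each independent sample avoids $S$ with probability at most $1 - t/|\mathcal{U}|$, so by independence and the inequality $1-x \leq e^{-x}$,
\[
\Pr[\text{no sample lies in } S] \leq \left(1-\frac{t}{|\mathcal{U}|}\right)^N \leq \exp\!\left(-\frac{Nt}{|\mathcal{U}|}\right) = \exp(-\log(|\mathcal{C}|/\delta)) = \frac{\delta}{|\mathcal{C}|}.
\]

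Next I would take a union bound over the at most $|\mathcal{C}|$ sets in the collection: the probability that some set in $\mathcal{C}$ is missed by every sample is at most $|\mathcal{C}| \cdot \delta/|\mathcal{C}| = \delta$. Equivalently, with probability at least $1-\delta$ every set in $\mathcal{C}$ contains at least one of the sampled elements, which is exactly the claim. There is no real obstacle; the only point to check is that the particular choice of $N$ makes the per-set failure probability after exponentiation equal $\delta/|\mathcal{C}|$, which is precisely why the logarithmic factor is inserted into the sample size.
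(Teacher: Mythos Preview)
Your proposal is correct and follows essentially the same argument as the paper: bound the probability of missing a fixed set via $(1 - t/|\mathcal{U}|)^N \leq e^{-Nt/|\mathcal{U}|} = \delta/|\mathcal{C}|$, then union bound over all sets in $\mathcal{C}$.
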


\begin{lemma}\label{lemma:bfs_communication}
	The deterministic communication complexity of computing a BFS (breadth first search) tree from a given node in the message passing model (with our without duplication) is $\tilde O(sn)$.
\end{lemma}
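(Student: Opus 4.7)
The plan is to simulate the standard layered BFS procedure, with the coordinator orchestrating the exploration one layer at a time. Write $L_\ell = \{v : d_G(r,v) = \ell\}$ for the $\ell$-th BFS layer and $D_\ell = L_0 \cup L_1 \cup \cdots \cup L_\ell$. The protocol will maintain the invariant that at the start of round $\ell$, the coordinator and every player have synchronized copies of $L_\ell$ and $D_\ell$, and the coordinator has already committed a BFS parent in $L_{\ell'-1}$ for every vertex in every layer $L_{\ell'}$ with $\ell'\le\ell$.

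First I would initialize $L_0 = \{r\}$ at zero communication (all parties know $r$). In round $\ell$, the coordinator broadcasts the new layer $L_\ell$ to each of the $s$ players, at a total cost of $O(s|L_\ell|\log n)$ bits, and each player updates its local copy of $D_\ell = D_{\ell-1}\cup L_\ell$. Next, each player $i$ scans its local edgeset $E_i$ and, for every vertex $v\notin D_\ell$ that has at least one neighbor $u\in L_\ell$ in $E_i$, sends a single pair $(v,u)$ to the coordinator. The coordinator then takes the union of the reported vertices to form $L_{\ell+1}$, assigns an arbitrary reported $u$ as the BFS parent of each $v\in L_{\ell+1}$, and proceeds to the next round, terminating once some round produces no new vertices (which the coordinator signals with a short message).

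For the cost analysis, broadcasting $L_\ell$ summed over all rounds costs $O\bigl(s\sum_\ell |L_\ell|\log n\bigr) = O(sn\log n)$, since the nonempty layers partition the vertices reachable from $r$. For the upward messages, any pair $(v,u)$ reported by player $i$ in round $\ell$ has $v\in L_{\ell+1}$ by construction (a neighbor of $L_\ell$ outside $D_\ell$ is at distance exactly $\ell+1$), so player $i$ sends at most $|L_{\ell+1}|$ pairs in round $\ell$; summing over players and rounds yields $O\bigl(s\sum_\ell |L_{\ell+1}|\log n\bigr) = O(sn\log n)$ bits. Both phases are deterministic, and the protocol is correct in both the with- and without-duplication models, since duplicated edges simply produce redundant parent proposals that the coordinator harmlessly discards.

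The main obstacle is ensuring that each player's upward message in round $\ell$ is bounded by $|L_{\ell+1}|$ rather than by the (potentially much larger) neighborhood of $L_\ell$ inside $E_i$; this is exactly what forces the $\tilde O(sn)$ downward broadcast of $L_\ell$, since without knowing $D_\ell$ players cannot restrict their reports to genuinely new vertices. Once that invariant is paid for, the telescoping bound $\sum_\ell |L_\ell|\le n$ delivers the $\tilde O(sn)$ total.
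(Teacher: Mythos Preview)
Your proof is correct and essentially identical to the paper's: both run a layered BFS in which the coordinator broadcasts the current frontier $L_\ell$ (the paper's ``active set'' $A_t$), each player replies with one edge per newly discovered vertex, and the $\tilde O(sn)$ bound follows since every vertex is broadcast once and reported at most once per player. Your write-up is slightly more explicit about why a reported $v\notin D_\ell$ with a neighbor in $L_\ell$ must lie in $L_{\ell+1}$, but the protocol and accounting match the paper's.
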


We are ready to state the main algorithm of this section:
\begin{theorem} \label{plus2without}
The randomized communication complexity of the additive-$2$ spanner problem without edge duplication is $\tilde{O}(\sqrt{s} n ^{3/2})$.
\end{theorem}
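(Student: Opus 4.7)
The plan is to implement a distributed version of the Aingworth--Chekuri--Indyk--Motwani/Dor--Halperin--Zwick additive $2$-spanner algorithm, but with the heavy/light threshold tuned to exploit the lack of edge duplication. Recall that the offline algorithm fixes a threshold $\tau$, calls a vertex \emph{heavy} if its degree is at least $\tau$ and \emph{light} otherwise, puts every edge incident to a light vertex into the spanner $H$, samples a set $S$ of $\tilde{O}(n/\tau)$ vertices uniformly at random so that every heavy vertex has a neighbor in $S$ (exactly \cref{lemma:samplecover} applied to the heavy vertices' neighborhoods, each of size at least $\tau$), and for each $w\in S$ adds a BFS tree rooted at $w$ to $H$. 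The key parameter choice for us will be $\tau=\sqrt{sn}$ in place of the usual $\sqrt n$.

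For the communication analysis, the preliminary remark already lets every player know every vertex's degree, so all players agree on the sets of heavy and light vertices with no further communication. Each player then locally sends to the coordinator all of its edges incident to at least one light vertex; since edges are not duplicated, the total number of such edges across all players is at most $\sum_{v\text{ light}}\deg(v)\le n\tau$, contributing $\tilde{O}(n\tau)=\tilde{O}(\sqrt{s}\,n^{3/2})$ bits. The coordinator next samples $\tilde{O}(n/\tau)=\tilde{O}(\sqrt{n/s})$ roots, broadcasts them to all players at the negligible cost of $\tilde O(\sqrt{sn})$ bits, and by \cref{lemma:bfs_communication} runs a BFS from each root at cost $\tilde O(sn)$ bits per tree, for a total of $\tilde{O}((n/\tau)\cdot sn)=\tilde{O}(\sqrt{s}\,n^{3/2})$. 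Summing the two dominant terms gives $\tilde{O}(\sqrt s\,n^{3/2})$. Correctness is the textbook argument: for any $u,v\in V$ take a shortest $u$-$v$ path $u=v_0,\dots,v_\ell=v$ in $G$; if every $v_i$ is light then every edge of the path lies in $H$, and otherwise the high-probability event of \cref{lemma:samplecover} gives some $w\in S$ adjacent to a heavy $v_i$, so the BFS tree of $w$ that is contained in $H$ yields $d_H(u,w)\le d_G(u,v_i)+1$ and $d_H(w,v)\le d_G(v_i,v)+1$, and the triangle inequality gives $d_H(u,v)\le d_G(u,v)+2$.

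The only real subtlety is the parameter choice: the standard offline analysis sets $\tau=\sqrt n$ to balance $n\tau$ spanner edges against $n^2/\tau$ total BFS-tree edges, but in the message-passing model without duplication the quantities to balance are the $n\tau$ bits for light-incident edges (independent of $s$ thanks to disjointness of the $E_i$'s) against the $sn\cdot(n/\tau)$ bits needed to carry out all BFS trees, and balancing these gives $\tau=\sqrt{sn}$ together with the claimed $\tilde O(\sqrt s\,n^{3/2})$ bound and a $\sqrt s$ improvement over any black-box simulation of the offline algorithm that would effectively resend $s$ copies of each BFS tree.
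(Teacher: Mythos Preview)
Your proposal is correct and is essentially the same as the paper's proof: you choose the heavy/light threshold $\tau=\sqrt{sn}$, send all edges incident to light vertices (using disjointness of the $E_i$ to bound this by $n\tau$), sample $\tilde O(n/\tau)=\tilde O(\sqrt{n/s})$ centers, and run a BFS from each at cost $\tilde O(sn)$ apiece via \cref{lemma:bfs_communication}. The correctness argument and the balancing of $n\tau$ against $sn\cdot n/\tau$ are exactly those in the paper.
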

\begin{algorithm}
\caption{$+2$ spanner without edge duplication}
\begin{algorithmic}[1]
\Require $G = (V, E)$.
\Ensure $H$, $+2$ spanner of $G$.
\State $V_{1} = \{x \in V: \text{degree of } x \leq \sqrt{sn} \}$.
\State Each player sends the coordinator all edges adjacent to $V_1$. The coordinator aggregates these and compiles the set $E_{1} = \{(u,v) \in E: u \in V , v \in V_{1} \}$. \label{line:compileE1}
\State The coordinator samples $2 \log n \cdot \sqrt{\frac{n}{s}}$ vertices uniformly at random with replacement from $V$, and let $\mathcal{R}$ denote the sampled vertex set.
\State Grow a BFS tree $T_{x}$ from each $x \in \mathcal{R}$, let $E(T_x)$ be its edge set. \label{line:growBFS}
\State $F = E_1 \cup \bigcup_{x \in \mathcal{R}} E(T_{x})$.
\State \Return $H = (V,F)$.
\end{algorithmic}
\end{algorithm}
\begin{proof}
First we will show this algorithm provides a $+2$ spanner of $G$ with constant probability.

Consider the set $V \setminus V_1$ of vertices with degree $\geq \sqrt{sn}$. Let $\mathcal{E}$ denote the event that $\mathcal{R}$ contains at least one vertex from the neighborhood of every vertex in this set. Applying Lemma \ref{lemma:samplecover} with $\mathcal{U} = V$, with $\mathcal{C}$ as the collection of neighborhoods of vertices in $V \setminus V_1$, and with $t = \sqrt{sn}$, we have that $\mathcal{E}$ occurs with probability at least $1 - o(1)$.

Now for an arbitrary pair of vertices $u,v$, let us consider the shortest path $P$ connecting them in $G$. Suppose an edge $(x,y) \in P$ is missing from $E_1$. This implies that both $x$ and $y$ are in $V \setminus V_1$ and have degree strictly larger than $\sqrt{sn}$. If $\mathcal{E}$ holds, then $x$ has a neighbor $w$ sampled in $\mathcal{R}$. Then:
\begin{eqn}
d_H(u,v) &\leq d_H(u,w) + d_H(w,v) \leq d_G(u,w) + d_G(w,v) \leq d_G(u,x) + 1 + d_G(x,v) + 1 = d_G(u,v) + 2
\end{eqn}
Above the first and third line follow from the triangle inequality, the second holds since $H$ includes a BFS-tree rooted at $w$, and the last line since $x$ is on the shortest path between $u$ and $v$.

Next we bound the communication. Line \ref{line:compileE1} requires $\tilde O(\sqrt{s}n^{3/2})$ communication since each edge is in $N(V_1)$ is sent exactly once. By Lemma \ref{lemma:bfs_communication}, growing $\log n \sqrt{n/s}$ BFS trees on line \ref{line:growBFS} requires $\tilde O(sn \cdot \sqrt{n/s} = \sqrt{s}n^{3/2})$ communication as well.
Thus the total communication is $\tilde{O}(\sqrt{s} n ^{3/2})$.
\end{proof}

We will later show that this is nearly optimal by showing a lower bound of $\Omega(\sqrt{s}n^{3/2})$ for the weaker problem of computing a multiplicative $3$-spanner in \cref{thm:mult-3-lb} later in the paper.

\subsection{Additive \texorpdfstring{$k$}{k}-Spanners without Duplication}

We now study the additive spanner problem with larger distortion in the without duplication model.

Although our additive $2$-spanner lower bound came from our multiplicative $3$-spanner lower bounds, for larger distortions, this technique does not give optimal dependence on $n$, which is the dominant variable in the parameter regime of interest. Instead, we show the following:

\begin{theorem}
	\label{thm:additiveNoDupLB}
	The randomized communication complexity of the additive $k$-spanner problem without edge duplication is $\Omega(n^{4/3 - o(1)} + sn)$.
\end{theorem}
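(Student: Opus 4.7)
The plan is to establish the two terms of the lower bound, $\Omega(n^{4/3-o(1)})$ and $\Omega(sn)$, via separate arguments; the sum then equals the max up to a factor of two.

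For the $\Omega(n^{4/3-o(1)})$ term, I would adapt the argument of \cref{thm:additive-lb}, which used the Abboud--Bodwin extremal graph $G^*$ from \cite{abboud20164} together with \cref{lemma:propertysubgraph} to obtain an $\Omega(sn^{4/3-o(1)})$ bound with duplication. In the no-duplication model, the $s$-player reduction used there breaks down because it required each player to redundantly hold each edge, so I would instead use a two-player reduction on a balanced partition of $E(G^*)$: each of the two players is pre-assigned one half of the edges and holds the edges in its half that are not excluded by its 2-player set-disjointness input $X_i \subseteq [m/2]$, where $m = n^{4/3-o(1)}$. This yields an $\Omega(m) = \Omega(n^{4/3-o(1)})$ lower bound from the 2-player set disjointness complexity on $[m/2]$, or equivalently from an information-theoretic argument that the coordinator's output, which by the essentiality property of Abboud--Bodwin essentially equals the input graph, encodes $\Omega(m)$ bits about the players' inputs.

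For the $\Omega(sn)$ term, I would reduce from $s$-player set disjointness on universe $[n/2]$ via a bipartite connectivity gadget. The vertex set consists of $s$ ``player vertices'' $u_1, \dots, u_s$, $n/2$ ``coordinate vertices'' $v_1, \dots, v_{n/2}$, and enough isolated padding vertices to reach $n$ total. Given inputs $X_1, \dots, X_s \subseteq [n/2]$, player $i$ holds the edge $(u_i, v_j)$ iff $j \notin X_i$; since all of player $i$'s edges share the distinctive endpoint $u_i$, the edge sets across players are automatically disjoint, satisfying the no-duplication constraint. Any additive-$k$-spanner $H$ of $G$ must preserve connectivity, so coordinate vertex $v_j$ is isolated in $H$ iff $v_j$ is isolated in $G$ iff $j \in \bigcap_i X_i$. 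The coordinator can thus decide $s$-player set disjointness from $H$, yielding $\Omega(sn)$ by \cref{thm:setdisjointness} when $s = \Omega(\log n)$; for smaller $s$ the $\Omega(n^{4/3-o(1)})$ bound already absorbs $\Omega(sn)$.

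The main obstacle is the $\Omega(n^{4/3-o(1)})$ part: without duplication we cannot use the natural $s$-player reduction that had each player redundantly vote on each edge. The fix is to restrict to a 2-player reduction (losing an $s$ factor in that part of the bound) and then separately recover the $s$-dependence through the connectivity-based reduction for the $\Omega(sn)$ term. Verifying that the Abboud--Bodwin essentiality property transfers cleanly to the 2-player no-duplication setting---so that the coordinator's output actually forces $\Omega(m)$ bits of communication---is the main technical check.
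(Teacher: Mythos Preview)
Your $\Omega(sn)$ argument via the bipartite connectivity gadget is correct and matches what the paper does (the paper simply cites the graph-connectivity lower bound from \cite{woodruffzhang13}, which is proved by essentially the same construction).

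For the $\Omega(n^{4/3-o(1)})$ term, the two-player set-disjointness reduction does not work as written. If player~1 is pre-assigned one half of $E(G^*)$ and player~2 the other half, with each excluding the edges indexed by its own $X_i \subseteq [m/2]$, then the two inputs control \emph{disjoint} edge sets: there is no graph-level event that corresponds to whether $X_1 \cap X_2 = \emptyset$, so no property of the output spanner can decide disjointness. What the spanner does reveal is all of $X_1$ and all of $X_2$ separately, which is strictly more information---but that is an entropy/counting argument, not a $\mathrm{DISJ}$ reduction.

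Your parenthetical ``information-theoretic argument'' is therefore the right salvage, and it is essentially the paper's route. The paper invokes Theorem~2 of \cite{abboud20164} directly: there is a family $\mathcal{S}_n$ of $2^{\Omega(n^{4/3-\epsilon})}$ graphs on $n$ vertices such that for any two distinct members some vertex pair has distances differing by at least $k$, hence an additive-$k$-spanner of one is never a spanner of another. Drawing uniformly from $\mathcal{S}_n$, placing all edges on a single player, and applying a counting argument with distributed Yao gives $\Omega(n^{4/3-o(1)})$ immediately. Your variant based on the single extremal graph $G^*$ together with the disjoint clique-edge-sets and Claim~5 of \cite{abboud20164} can also be carried through, but then you must spell out the entropy argument yourself; citing the pre-packaged incompressibility family $\mathcal{S}_n$ makes that step automatic.
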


\begin{proof}
	The lower bound essentially follows from the strong incompressibility result of additive spanners in Theorem 2 of \cite{abboud20164}. We defer the details to \cref{section:additiveNoDupLB}.
\end{proof}

It is worth noting why the communication lower bounds from set disjointness no longer hold in the setting where edges are not duplicated across players, at least if we follow the same reduction as above. Imposing the assumption of edge disjointness amounts to imposing the restriction that the set disjointness instances have the property that the complements of the bit vectors held by each player do not intersect. However, this restricted problem no longer has an $\Omega(sn)$ lower bound, since it can be decided with $O(s\log n)$ bits of communication: each player sends the size of their complement set to the coordinator, and the coordinator checks if their sum is exactly $n$.

We now turn to algorithms, showing that the communication drops off by a factor of $\sqrt{k}$ for larger additive distortions $k$.
\begin{theorem}
	The randomized communication complexity of the additive $k$-spanner problem without edge duplication is $\tilde{O}(\sqrt{s/k}n^{3/2} + snk)$.
\end{theorem}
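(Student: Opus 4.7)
The plan is to generalize the additive $2$-spanner algorithm of \cref{plus2without} by lowering the degree threshold to $d = \Theta(\sqrt{sn/k})$, which reduces the cost of sending low-degree edges to $\tilde O(nd) = \tilde O(\sqrt{s/k}\cdot n^{3/2})$ bits and correspondingly lets us afford a smaller BFS sample, at the price of a larger additive distortion. Specifically, I would sample $R \subseteq V$ uniformly at random with $|R| = \tilde O\!\bigl(\max\{k,\, n/(kd)\}\bigr) = \tilde O\!\bigl(\max\{k,\, \sqrt{n/(sk)}\}\bigr)$, grow a BFS tree $T_r$ from each $r \in R$, and output $H = E_1 \cup \bigcup_{r\in R} E(T_r)$, where $E_1$ is the set of edges incident to a vertex of degree at most $d$. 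By \cref{lemma:bfs_communication}, the BFS step costs $\tilde O(sn|R|) = \tilde O(\sqrt{s/k}\cdot n^{3/2} + snk)$ bits, matching the target.

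For correctness, fix a pair $u, v$ and a shortest $uv$-path $P=(x_0,\ldots,x_\ell)$ in $G$; let $t$ be the number of \emph{heavy} vertices (degree $> d$) on $P$. The key combinatorial fact I would exploit is that any two vertices on $P$ at graph distance $\geq 3$ have disjoint $1$-neighborhoods (otherwise $P$ would admit a strictly shorter path), so any $k$ consecutive heavy vertices (with respect to the ordering of heavy vertices on $P$) contain a subset of size $\geq k/3$ with pairwise disjoint $1$-neighborhoods, collectively spanning $\Omega(kd)$ distinct vertices. Combined with \cref{lemma:samplecover} and a union bound over $O(n^3)$ events (pairs $(u,v)$ times $O(n)$ window positions along a fixed shortest path per pair), the sample $R$ of size $\tilde\Theta(n/(kd))$ with high probability hits the neighborhood of at least one heavy vertex in every block of $k$ consecutive heavy vertices on every shortest path.

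I would then dichotomize on $t$. When $t \leq k^2$, I group the heavy vertices on $P$ into blocks of size $k$ and detour each block via the BFS tree at a sampled neighbor $r$ of some vertex in that block; the detour from the first to the last heavy vertex in the block via $r$ has length at most $2$ more than the corresponding sub-path in $P$, so the total additive distortion is at most $2\lceil t/k\rceil = O(k)$, while edges between blocks are in $E_1 \subseteq H$. When $t > k^2$, the combined $1$-neighborhood of all heavy vertices on $P$ has size $\Omega(td) \gg kd$, so \cref{lemma:samplecover} gives with even higher probability a single sampled neighbor from which a single long-range BFS detour suffices for $+2 \leq +k$ distortion. Rescaling $k$ by a constant factor then yields a true additive $+k$ spanner.

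The main obstacle is the distortion analysis: the naive ``one detour per bad edge'' or ``one detour per bad segment'' argument scales with the number of bad structures on $P$, which can far exceed $k$. The resolution hinges on a single BFS detour being able to span an entire \emph{block} of $k$ heavy vertices at once, because the BFS tree from a sampled neighbor of any vertex in the block supplies a shortest path in $G$ from that neighbor to every other vertex of $P$. Verifying the combinatorial neighborhood-disjointness property along shortest paths carefully (including the constant-factor loss from pairwise distance $\geq 3$ in the block), and handling the boundary case where $t$ is between small and large so that both sub-arguments apply, constitute the technically more involved aspects of the proof.
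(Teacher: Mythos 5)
Your approach starts in the right place---same degree threshold $d = \Theta(\sqrt{sn/k})$, same ``send light edges + BFS from a sample'' structure---but there is a genuine gap: you use \emph{only full BFS trees}, whereas the paper's algorithm crucially also uses \emph{truncated} BFS trees, and without them the budget does not close.

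The problem is your small-$t$ regime. Your block argument in the case $t \le k^2$ implicitly requires a block to contain $k$ heavy vertices, so that the block's combined $1$-neighborhood has $\Omega(kd)$ vertices and a sample of size $\tilde O(n/(kd)) = \tilde O(\sqrt{n/(sk)})$ hits it. But when the shortest path $P$ has fewer than $k$ heavy vertices (equivalently, fewer than $k$ missing heavy--heavy edges)---a case that certainly happens and cannot be ignored, since it is precisely the hardest regime to fix cheaply---there is no block of size $k$: the combined $1$-neighborhood of the heavy vertices on $P$ can be as small as $\Theta(d) = \Theta(\sqrt{sn/k})$. Hitting a set of that size with uniform samples requires $\tilde\Omega(n/d) = \tilde\Omega(\sqrt{kn/s})$ samples, and paying for that many \emph{full} BFS trees costs $\tilde\Omega(\sqrt{kn/s}\cdot sn) = \tilde\Omega(\sqrt{sk}\,n^{3/2})$, which overshoots the target by a factor of $k$. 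Your sample size $\tilde O(\max\{k,\sqrt{n/(sk)}\})$ is a factor of $k$ too small for this case (a quick expected-hits calculation gives $|R|\cdot d/n = \Theta(1/k)$), and the case $t>k^2$ does not rescue it since there we already have $t\ge k$. Also, as a smaller issue, the claim that ``edges between blocks are in $E_1$'' fails when the last heavy vertex of one block is adjacent on $P$ to the first heavy vertex of the next.

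The paper's resolution is exactly the piece you are missing: sample a second, larger set $\mathcal R_2$ of $\tilde O(\sqrt{kn/s})$ centers, but only grow \emph{truncated} BFS trees of size $n/k$ from them, which costs $\tilde O(\sqrt{kn/s}\cdot sn/k) = \tilde O(\sqrt{s/k}\,n^{3/2})$ total. When fewer than $k$ path edges are missing, $\mathcal R_2$ hits $N_1(e)$ for every missing edge $e$; if that truncated tree reaches both endpoints of $e$ we get a $+2$ bridge per missing edge (so $+2(k-1) \le +k$ after rescaling), and if it is cut off prematurely then $|N_2(u_e)| \ge n/k$, at which point only $\tilde O(k)$ \emph{full} BFS centers (contributing $\tilde O(snk)$, which you already budgeted) are needed to land one within distance $3$ of $P$ and give a $+6$ detour. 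That two-tier BFS with the truncation threshold tuned to $n/k$ is what makes the $t<k$ regime affordable, and it is not recoverable from full BFS alone.
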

\begin{algorithm}
\caption{$+k$ spanner without edge duplication}
\begin{algorithmic}[1] \label{swithout}
\Require $G = (V, E)$
\Ensure $H$, $+s$ spanner of $G$
\State $V_{1} = \{x \in V: \text{degree of } x \leq \sqrt{sn/k} \}$, $E_{1} = \{(u,v) \in E: u \in V , v \in V_{1} \}$
\State Uniformly sample $\tilde{O} (\sqrt{n/sk}) + \tilde{O}(k)$ vertices from $V$, and let $\mathcal{R}_1$ denote the set of sampled vertices. \label{line:sampleline}
\State Grow a BFS tree $T_{x}$ from every $x \in \mathcal{R}_1$, $E_2 = \{e \in E : e \in T_x \text{ for some } x \in \mathcal{R}_1 \}$. \label{line:fullBFS}
\State Uniformly sample $\tilde{O}(\sqrt{kn/s})$ vertices from $V$, and let $\mathcal{R}_2$ denote the sampled vertices.
\State Grow a truncated BFS tree $T_{x}$ from every $x \in \mathcal{R}_2$, such that $|T_x| = n / k$. (In the last level of building the tree, arbitrarily include edges until $|T_x| = n/k$.) Let ${E_3 = \{e \in E : e \in T_x \text{ for some } x \in \mathcal{R}_2\}}$ \label{line:truncatedBFS}
\State $F \leftarrow E_1 \cup E_2 \cup E_3$
\State \Return $H = (V,F)$
\end{algorithmic}
\end{algorithm}

\begin{proof}
We will first show that \cref{swithout} gives an additive $k$-spanner with probability at least $1-o(1)$, then argue that it achieves the stated communication complexity. We may assume that $k \geq 6$, since otherwise \cref{plus2without} directly implies the claim. For convenience, let $N_\ell(e)$ denote the set of vertices within $\ell$ hops of either of the endpoints of $e$. Suppose we have added only the edges $E_1$ which are adjacent to vertices of degree at most $\sqrt{sn/k}$, which is the case at the end of line \ref{line:fullBFS}, and consider the shortest path $P$ in $G$ between an arbitrary pair of vertices $u,v$. Let $D$ be the set of edges of $P$ missing from $E_1$.

\begin{itemize}
	\item \textbf{Case 1}: $|D| \geq k$

	Since $P$ is a simple path and we have already included all edges adjacent to low-degree vertices, there are collectively at least $k \sqrt{sn/k}/2 = \Omega(\sqrt{snk})$ vertices in the union of the $N_1(e)$ for all $e$ missing from $P$. Let $\mathcal{E}_1$ be the event that $\mathcal R_1$ contains a vertex from this neighborhood for every choice of $u, v$ with at least $k$ missing path edges. If $\mathcal{E}_1$ holds, since in line \ref{line:truncatedBFS} we include a BFS tree from each sampled vertex, this implies that the returned $H$ is a $+2$ spanner for each such pair of $u,v$ by the same reasoning as in the proof of \cref{plus2without}.

	\item \textbf{Case 2}: $|D| < k$

	In what follows, we will argue that our construction either bridges each missing $e=(u',v')\in D$ with a $2$-hop path, or places the root of a full BFS within distance $3$ of $P$. If all $e \in D$ are bridged by 2-hop paths, we will argue that these paths are contained in truncated BFS trees included in line \ref{line:truncatedBFS}. Since there are at most $k$ edges missing from $P$, and since the distance between the endpoints of $e$ changes from $d_G(u', v') = 1$ to $d_H(u', v') = 2$, we will have that $d_H(u, v) \leq d_G(u,v) + k$. On the other hand if there is a BFS tree center $a$ within distance $3$ of $u'$, then by the triangle inequality
	\begin{eqn}
	    d_H(u,v) &\leq d_H(u,a) + d_H(a,v) \\
	    &\leq d_G(u,u') + d_G(u',a) + d_G(a,u') + d_G(u',v) \\
	    &\leq d_G(u,u') + d_G(u',v) + 3 + 3 = d_G(u,v) + 6
	\end{eqn}
	and since we may assume that $k \geq 6$, we will again have that $d_H(u, v) \leq d_G(u,v) + k$.

	Let $\mathcal{E}_2$ denote the event that $\mathcal{R}_2$ samples a vertex $u_e$ in $N_1(e)$ for every missing edge $e$. Furthermore, let $\mathcal{E}_3$ denote the event that $\mathcal{R}_1$ samples at least one vertex $v_e$ from $N_2(u_e)$ for every edge $e$ for which  $|N_2(u_e)| \geq n/k$. If $\mathcal{E}_2$ and $\mathcal{E}_3$ both hold, then:
	\begin{itemize}
		\item \textbf{Case a}: for all $e \in D$, we have $N_2(u_e) \leq n/k$

		By $\mathcal{E}_2$, there is a truncated BFS center in $N_1(e)$ for all $e \in D$ that reaches both endpoints of $e$. So all missing edges have $2$-hop paths.

		\item \textbf{Case b}: For some $e \in D$, we have $N_2(u_e) > n/k$

		By $\mathcal{E}_3$, there is a full BFS center $v_e$ in $N_2(u_e)$, which is at a distance at most
		\begin{equation}
		    d_G(P,v_e)\leq d_G(P,u_e) + d_G(u_e,v_e)\leq 1 + 2 = 3
		\end{equation}
		of $P$.
	\end{itemize}
	By the arguments above, this sub-case analysis implies that $H$ is a $+s$ spanner for all $u,v$ for which at most $s$ edges are missing from $P$.
\end{itemize}

It remains to show that $\mathcal{E}_1$, $\mathcal{E}_2$ and $\mathcal{E}_3$ hold simultaneously with probability $1 - o(1)$. All three can be made to hold individually with probability $1 - o(1)$ by applying Lemma \ref{lemma:samplecover}, and this will determine the $\tilde{O}$ factors in \cref{swithout}. By a union bound all three events hold simultaneously with probability $1 - o(1)$.

We now consider the communication complexity.
Identifying the vertices of degree at most $\sqrt{sn/k}$ and communicating their incident edges in line \ref{line:sampleline} requires $\tilde{O}(\sqrt{s/k}n^{3/2})$ communication. By Lemma \ref{lemma:bfs_communication} the full BFS trees constructed in line \ref{line:fullBFS} require $\tilde{O}(\sqrt{s/k}n^{3/2} + snk)$ communication. Similarly, the truncated BFS trees found in step 8 require $\tilde{O}(sn/k)$ communication each, for a total of $\tilde{O}(\sqrt{s/k}n^{3/2})$. Adding, we obtain an upper bound of $\tilde{O}(\sqrt{s/k}n^{3/2} + snk)$.
\end{proof}

\section{Multiplicative Spanners}
In this section we study how to compute multiplicative spanners of graphs in the message-passing model. Recall the definition of multiplicative spanners.

\begin{definition}[Multiplicative spanners]
	Given a graph $G$, a subgraph $H$ is a \emph{multiplicative $\alpha$-spanner for $G$} if for all vertex pairs $u, v \in V$,
	$d_{H}(u,v) \leq \alpha \cdot d_{G}(u,v).$
where $d_{G}(u,v)$ and $d_{H}(u,v)$ are the shortest path distances in $G$ and $H$ respectively.
\end{definition}

\subsection{Multiplicative \texorpdfstring{$(2k-1)$}{(2k-1)}-Spanners with Duplication}

We start with a warmup theorem. Consider the case $\alpha = 2$ and with edge duplication.

\begin{theorem} \label{thm:3.2}
The communication cost of computing a multiplicative $\alpha  = 2$-spanner with duplication is $\tilde{\Theta} (s n^2)$.
\end{theorem}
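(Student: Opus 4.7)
The plan is to follow exactly the same two-step paradigm used for the additive $2$-spanner in \cref{thm:42}: exhibit a family of extremal graphs on which no proper subgraph is a multiplicative $2$-spanner, plug into \cref{lemma:propertysubgraph} for the lower bound, and observe a trivial matching upper bound.

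For the upper bound, I would simply have each player send its entire edge set to the coordinator by transmitting a bit vector of length $\binom{n}{2}$, which uses $O(n^2)$ bits per player, for $O(sn^2)$ bits in total. The coordinator then holds all of $G$ and can output $G$ itself (or any $2$-spanner it likes) as the answer. This matches the target $\tilde O(sn^2)$.

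For the lower bound, the key observation is that for a multiplicative $2$-spanner, the relevant extremal parameter is triangle-freeness rather than girth at least $6$. Concretely, if $G$ has no triangle, then for any edge $\{u,v\}\in E(G)$, the vertices $u$ and $v$ share no common neighbor, so removing $\{u,v\}$ forces $d_{G\setminus\{u,v\}}(u,v)\geq 3 > 2\cdot d_G(u,v)$. Hence the only multiplicative $2$-spanner of a triangle-free graph is the graph itself. The complete balanced bipartite graph $G_n := K_{\lfloor n/2\rfloor,\lceil n/2\rceil}$ is triangle-free and has $\Theta(n^2)$ edges, so the family $\{G_n\}_n$ fits the hypotheses of \cref{lemma:propertysubgraph} with $\mathcal{P}$ the set of all graphs and $R$ the set of pairs $(G,H)$ such that $H$ is a multiplicative $2$-spanner of $G$. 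Applying \cref{lemma:propertysubgraph} with $f(n)=\Theta(n^2)$ immediately yields the desired $\Omega(sn^2)$ bit lower bound.

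There is no real obstacle here; the only subtlety worth flagging is the contrast with the additive case. In \cref{thm:42}, girth $6$ was required so that deleting an edge blew the distance up by an additive $+2$, which capped the extremal density at $\Theta(n^{3/2})$. For a multiplicative $\alpha=2$ distortion, merely ruling out $2$-paths (triangles) is enough, so we can use a much denser graph and obtain an $n^{1/2}$ factor improvement in the lower bound. Combining the two bounds gives $\tilde\Theta(sn^2)$ as claimed.
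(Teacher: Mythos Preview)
Your proof is correct and matches the paper's own argument essentially verbatim: the paper uses $K_{n/2,n/2}$ as the triangle-free extremal graph, observes that removing any edge pushes the distance to at least $3$, invokes \cref{lemma:propertysubgraph} for the $\Omega(sn^2)$ lower bound, and takes the trivial send-everything protocol for the upper bound. Your added remark contrasting the girth-$6$ requirement in the additive case with mere triangle-freeness here is a nice touch but not in the original.
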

\begin{proof}
For the lower bound, let $K_{n/2, n/2}$ be the complete bipartite graph on $n$ vertices with $n^{2}/4$ edges. Removing any edge $u$,$v$ from this graph increases the distance from $u$ to $v$ to at least $3$, and thus the only multiplicative-$2$ spanner for $K_{n/2, n/2}$ is $K_{n/2, n/2}$ itself. By Lemma \ref{lemma:propertysubgraph}, the communication cost of the multiplicative $2$-spanner problem in the message-passing model is $\Omega(s \cdot n^{2})$.

If all players send their edges to the coordinator, we obtain a matching $\tilde{O}(s n^{2})$ bit communication protocol.
\end{proof}

One can extend the above argument for general multiplicative $(2k-1)$-spanners.

\begin{theorem} \label{thm:mult-with-dup}
The communication cost of the multiplicative $(2k-1)$-spanner problem with edge duplication is $\tilde{O}(s n^{1 + 1/ k})$. Under Erd\H{o}s' girth conjecture \cite{erdos1964extremal}, the bound is tight, in other words the cost is $\tilde{\Theta}(s n^{1 + 1/ k})$.
\end{theorem}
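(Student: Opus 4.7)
The plan is to obtain the upper bound by a local-then-union strategy and the lower bound by instantiating \cref{lemma:propertysubgraph} with a graph coming from Erd\H{o}s' girth conjecture.

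For the upper bound, I would have each player $P^i$ compute, locally and without any communication, a multiplicative $(2k-1)$-spanner $H_i \subseteq E_i$ of its own edgeset using the classical greedy algorithm of \cite{a85}, which guarantees $|H_i| = O(n^{1+1/k})$. Each player then sends $H_i$ to the coordinator, which uses $O(n^{1+1/k}\log n)$ bits per player and $\tilde O(sn^{1+1/k})$ bits in total. The coordinator returns $H = \bigcup_{i=1}^s H_i$. The correctness check is short: for any edge $(u,v)\in E(G)$, there exists $i$ with $(u,v)\in E_i$, hence $d_H(u,v)\leq d_{H_i}(u,v)\leq 2k-1$; concatenating such detours along a shortest $u$-to-$v$ path in $G$ gives $d_H(u,v)\leq (2k-1)\,d_G(u,v)$ for every pair. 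Note that this does not even need the players to use the greedy algorithm specifically, only that each $H_i$ is a $(2k-1)$-spanner of $E_i$ of size $O(n^{1+1/k})$.

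For the lower bound, I would invoke \cref{lemma:propertysubgraph} with $\mathcal{P}$ the set of all subgraphs on $V$ and $R$ the relation ``is a multiplicative $(2k-1)$-spanner of''. Under Erd\H{o}s' girth conjecture, there is a family $\{G_n\}$ with $|V(G_n)|=n$, $|E(G_n)|=\Omega(n^{1+1/k})$, and girth at least $2k+2$. For any edge $\{u,v\}\in E(G_n)$, removing it forces any remaining $u$-$v$ path to traverse a cycle of $G_n$ and therefore to have length at least $2k+1 > 2k-1$; hence $G_n$ itself is the unique $(2k-1)$-spanner of $G_n$. Both hypotheses of \cref{lemma:propertysubgraph} are then satisfied, with $p_n = G_n$ and $f(n) = \Omega(n^{1+1/k})$, which yields the $\Omega(s\,n^{1+1/k})$ bit lower bound via reduction from $\mathrm{DISJ}_{f(n),s}$ using \cref{thm:setdisjointness}.

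I do not expect a significant obstacle: both directions are direct applications of already-established lemmas in the paper (\cref{lemma:propertysubgraph}, \cref{thm:setdisjointness}) together with the classical existence and uniqueness properties of extremal girth graphs. The one small subtlety worth being explicit about is that the union of locally computed spanners of the $E_i$ is indeed a global $(2k-1)$-spanner of $G=(V,\bigcup_i E_i)$, which uses only the trivial observation that every edge of $G$ lies in some $E_i$ and that stretch composes additively along a shortest path. Everything else, including the fact that the upper bound matches the lower bound exactly up to polylog factors under the conjecture, follows immediately.
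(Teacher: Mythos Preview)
Your proposal is correct; the lower bound is essentially identical to the paper's. The upper bound, however, takes a genuinely different route. The paper implements the greedy spanner algorithm \emph{sequentially}: player $P^1$ greedily builds a partial spanner $F$ from $E_1$ (adding an edge only if it does not close a cycle of length $\le 2k$), forwards $F$ to $P^2$, who continues the same greedy process on $E_2$, and so on through $P^s$ and finally to the coordinator. Since $F$ always has girth $>2k$, it never exceeds $O(n^{1+1/k})$ edges, and each forwarding step costs $\tilde O(n^{1+1/k})$. Your local-then-union approach is simpler and in fact works in the one-round simultaneous model (the paper itself uses exactly this argument later for its simultaneous upper bound). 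The tradeoff is that the paper's sequential greedy outputs a spanner of size $O(n^{1+1/k})$, whereas your union $H=\bigcup_i H_i$ may have size $\Theta(sn^{1+1/k})$; since the theorem only asks about communication cost, both are acceptable here.
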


\begin{proof}
The upper bound follows from implementing the well known greedy algorithm for multiplicative spanners, while the lower bound comes from the extremal graphs of large girth given by the girth conjecture. The details are deferred to \cref{section:greedy}.
\end{proof}

\subsection{Multiplicative \texorpdfstring{$(2k-1)$}{(2k-1)}-Spanners without Duplication}

For $k=2$, the additive $2$-spanner algorithm of Theorem \ref{plus2without} immediately gives us a multiplicative $(2k-1)=3$-spanner algorithm with $\tilde O(\sqrt sn^{3/2})$ communication. We show that this bound is in fact tight. We will use the following fact about bipartite biregular graphs follows from Theorem 2 of \cite{YuanshengL03} by taking an appropriate subgraph of their construction:
\begin{corollary}\label{cor:bip-bir-gir6}
Let $s,n$ be such that $\sqrt{sn}$ be a prime power and $\sqrt{n/s}$ is an integer. Then, there exists a bipartite biregular graph of girth $6$ on $\Theta(n)$ vertices where one side has size $\Theta(n/s)$ with common degree $\sqrt{sn}$ and one side with size $\Theta(n)$ with common degree $\sqrt{n/s}$.
\end{corollary}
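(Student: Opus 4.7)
The plan is to invoke Theorem 2 of \cite{YuanshengL03}, which gives an explicit family of biregular bipartite graphs of girth $6$ coming from incidence geometries parameterized by a prime power, and then to specialize and pass to a subgraph to get the exact parameters we want. Set $q := \sqrt{sn}$ (a prime power by hypothesis) and $r := \sqrt{n/s}$ (a positive integer by hypothesis). The first thing I would record is the compatibility identity $q = sr$, so in particular $r \mid q$, and the consistency check that the two edge counts agree: $(n/s) \cdot \sqrt{sn} = n\sqrt{n/s} = n \cdot \sqrt{n/s}$.

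Next, I would unpack the construction of \cite{YuanshengL03}. Their construction is essentially the point-line incidence structure of a projective (or affine) plane of order $q$, which on its own gives a \emph{regular} bipartite graph of girth $6$ with $\Theta(q^2)$ vertices on each side and common degree $q+1$. To reach the \emph{biregular} target $(q, r)$, I would restrict the incidence graph to the desired substructure: since $r \mid q$, one can pick a sub-incidence-system (for instance, restricting one side to a subplane of order compatible with $r$, or to a union of parallel classes in an affine model) so that on the restricted graph each vertex of the small class has degree exactly $q$ while each vertex of the large class has degree exactly $r$. The small class then has $\Theta(r^2) = \Theta(n/s)$ vertices and the large class has $\Theta(qr) = \Theta(n)$ vertices, giving $\Theta(n)$ vertices total. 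Girth is automatic: any subgraph of a girth-$6$ graph has girth at least $6$, and trivially girth is at most $6$ for any nontrivial bipartite incidence graph of a plane, so the girth is exactly $6$.

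The main obstacle I anticipate is in the substructure-selection step: one needs to specify the subset of points/lines (or the analogous combinatorial objects in the \cite{YuanshengL03} framework) in a way that leaves both sides biregular, and verify the degree calculation via a double-counting argument using the divisibility $r \mid q$. Everything else --- checking that the total vertex count is $\Theta(n)$, that edges match, and that girth is preserved --- is routine once the correct substructure is identified, so I would not belabor those calculations.
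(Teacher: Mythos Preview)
Your proposal is correct and takes essentially the same approach as the paper: the paper's entire argument is the single sentence that the corollary ``follows from Theorem 2 of \cite{YuanshengL03} by taking an appropriate subgraph of their construction,'' which is exactly what you outline (with more speculative detail about the geometry). Your parameter calculations ($q=sr$, the vertex counts $\Theta(r^2)=\Theta(n/s)$ and $\Theta(qr)=\Theta(n)$, and the edge-count consistency) are correct and in fact go further than the paper itself does.
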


Using this extremal graph, we obtain the following theorem:
\begin{theorem}\label{thm:mult-3-lb}
The randomized communication cost of the multiplicative $3$-spanner problem without edge duplication is $\Omega(\sqrt{s}n^{3/2})$.
\end{theorem}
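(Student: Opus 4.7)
The plan is to reduce from multi-party set disjointness (\cref{thm:setdisjointness}) on a universe of size $N = |E(G^*)|$, where $G^*$ is the biregular bipartite graph from \cref{cor:bip-bir-gir6}. First I would invoke the corollary to obtain $G^*$ with small side $A$ of size $\Theta(n/s)$ having common degree $\sqrt{sn}$, large side $B$ of size $\Theta(n)$ having common degree $\sqrt{n/s}$, girth $6$, and $N = \Theta(n^{3/2}/\sqrt{s})$ edges. I then construct the hard instance $G'$ on vertex set $A' \cup B$, where $A' = \{a^{(i)} : a \in A, i \in [s]\}$ contains $s$ copies of each $a \in A$ (one per player), so that $|V(G')| = \Theta(n)$. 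For each edge $(a,b) \in E(G^*)$ and each $i \in [s]$, the ``potential edge'' $(a^{(i)}, b)$ is owned by player $i$; distinct players own disjoint potential edges.

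Given an instance $(X_1, \ldots, X_s)$ of $\mathrm{DISJ}_{N, s}$, I have player $i$ include the edge $(a^{(i)}, b)$ in their input iff $j_{(a,b)} \notin X_i$, producing a valid without-duplication input. The heart of the reduction is the key claim: $\bigcap_i X_i = \emptyset$ iff for every original edge $(a^*, b^*) \in E(G^*)$, there exists $i \in [s]$ with $d_H(a^{*(i)}, b^*) \leq 3$, where $H$ is any multiplicative $3$-spanner of $G'$. This allows the coordinator to decide set disjointness from the spanner locally by checking distances from $a^{*(i)}$ to $b^*$ in $H$ for each of the $|E(G^*)|$ original edges and each $i \in [s]$, with no additional communication.

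The main technical step, which I expect to be the primary obstacle, is to show that when all $s$ copies of a single original edge $(a^*, b^*)$ are missing from $G'$, one still has $d_{G'}(a^{*(i)}, b^*) \geq 5$ for every $i$. Since $G'$ is bipartite with $a^{*(i)} \in A'$ and $b^* \in B$ on opposite sides, only odd-length paths are possible, and the direct edge is missing by assumption. The nontrivial case is ruling out length-$3$ paths $a^{*(i)} - v_1 - a_1^{(j)} - b^*$: such a path requires $(a^*, v_1), (a_1, v_1), (a_1, b^*) \in E(G^*)$, which together with the original edge $(a^*, b^*) \in E(G^*)$ would form a $4$-cycle $a^* - v_1 - a_1 - b^* - a^*$ in $G^*$ whenever $a_1 \neq a^*$, contradicting girth $6$. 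The remaining case $a_1 = a^*$ forces the needed edge to be the copy $(a^{*(j)}, b^*)$, itself missing by assumption.

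Combining the reduction with the key claim, any protocol solving the multiplicative $3$-spanner problem in the without-duplication model with constant success probability yields a protocol for $\mathrm{DISJ}_{N, s}$ with the same communication cost. Applying \cref{thm:setdisjointness} with $N = \Theta(n^{3/2}/\sqrt{s})$ yields the desired lower bound of $\Omega(sN) = \Omega(\sqrt{s}\cdot n^{3/2})$ in the regime $s = \Omega(\log n)$, and for smaller $s$ the second part of \cref{thm:setdisjointness} gives $\Omega(N) = \Omega(n^{3/2}/\sqrt{s}) = \tilde\Omega(\sqrt{s}\cdot n^{3/2})$, matching the claim up to polylogarithmic factors.
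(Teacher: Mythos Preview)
Your proposal is correct and takes essentially the same approach as the paper: you both invoke the biregular bipartite girth-$6$ graph of \cref{cor:bip-bir-gir6}, replicate the small side $s$ times (one copy per player) while keeping a single copy of the large side, embed a $\mathrm{DISJ}_{N,s}$ instance by omitting player $i$'s copy of edge $j$ whenever $j\in X_i$, and then use the girth-$6$ property to argue that a length-$3$ alternate path would project to a $4$-cycle in the base graph. The only cosmetic difference is that the paper phrases the coordinator's test as ``some edge of $Z$ has all $s$ copies missing from $H$'' and derives a contradiction with the spanner property, whereas you phrase it as a distance check $d_H(a^{*(i)},b^*)\le 3$; your handling of the degenerate case $a_1=a^*$ and of the small-$s$ regime is slightly more explicit than the paper's, but the argument is the same.
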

\begin{proof}
Recall the graph $Z$ from \cref{cor:bip-bir-gir6} and let $U$ be the partite set with $\Theta(n/s)$ vertices and common degree $\sqrt{sn}$ and let $V$ be the partite set with $\Theta(n)$ vertices and common degree $\sqrt{n/s}$. Note that this graph has $m\coloneqq \Theta(n^{3/2}/\sqrt{s})$. We will reduce $s$ player set disjointness on $m$ elements to the problem of finding multiplicative $3$-spanners without edge duplication.

Consider $s$ copies of the vertex sets $U^1, U^2,\dots, U^s$, each belonging to each of the $s$ players, as well as one copy of the vertex set $V$ belonging to the coordinator. Now given an instance of set disjointness with $s$ players each holding a set $X_i\subseteq[m]$, we define our input graph $G$ by giving the $i$th player the edge indexed by $j\in[m]$ if and only if $j\notin X_i$. That is, if $\{a,b\}\in Z$ is the edge indexed by $j$, then we give $P^i$ the edge $\{(a,i), b\}$, where $(a,i)\in U^i$ is the copy of the vertex $a\in U$ and $b$ is the single copy of the vertex $b\in V$ that belongs to the coordinator. Note that this graph consists of $\Theta(n)$ vertices for the one copy of $V$ and $\Theta(s\cdot n/s) = \Theta(n)$ for the $s$ copies of $U$, and $\Theta(\sqrt{s}n^{3/2})$ edges without duplication.
\begin{figure}
    \centering
    \begin{tikzpicture}
        \node[shape=circle,draw=black,inner sep=2] (U1a) at (0,3) {};
        \node[shape=circle,draw=black,inner sep=2] (U1b) at (0,2.75) {};
        \node[shape=circle,draw=black,inner sep=2] (U1c) at (0,2.5) {};

        \node[shape=circle,draw=black,inner sep=2] (U2a) at (0,2) {};
        \node[shape=circle,draw=black,inner sep=2] (U2b) at (0,1.75) {};
        \node[shape=circle,draw=black,inner sep=2] (U2c) at (0,1.5) {};

        \node[] at (0,1) {$\vdots$};

        \node[shape=circle,draw=black,inner sep=2] (U3a) at (0,0.25) {};
        \node[shape=circle,draw=black,inner sep=2] (U3b) at (0,0) {};
        \node[shape=circle,draw=black,inner sep=2] (U3c) at (0,-0.25) {};

        \node[shape=circle,draw=black,inner sep=2] (Va) at (5,1.75) {};
        \node[shape=circle,draw=black,inner sep=2] (Vb) at (5,1.5) {};
        \node[shape=circle,draw=black,inner sep=2] (Vc) at (5,1.25) {};
        \node[shape=circle,draw=black,inner sep=2] (Vd) at (5,1) {};
        \node[shape=circle,draw=black,inner sep=2] (Ve) at (5,0.75) {};
        \node[shape=circle,draw=black,inner sep=2] (Vf) at (5,0.5) {};

        \path (U1a) edge[thick,red] (Vc);
        \path (U1b) edge[thick,red] (Va);
        \path (U1b) edge[thick,red] (Ve);
        \path (U2a) edge[thick,blue] (Vf);
        \path (U2b) edge[thick,blue] (Vf);
        \path (U2c) edge[thick,blue] (Vb);
        \path (U3b) edge[thick,green] (Vd);
        \path (U3c) edge[thick,green] (Ve);
        \path (U3c) edge[thick,green] (Va);

        \node[anchor=east] at (-0.1,2.75) {$U^1$};
        \node[anchor=east] at (-0.1,1.75) {$U^2$};
        \node[anchor=east] at (-0.1,0) {$U^s$};
        \node[anchor=west] at (5.1,1.125) {$V$};
    \end{tikzpicture}
    \caption{Our hard input instance.}
    \label{fig:bipartite-hard-instance}
\end{figure}
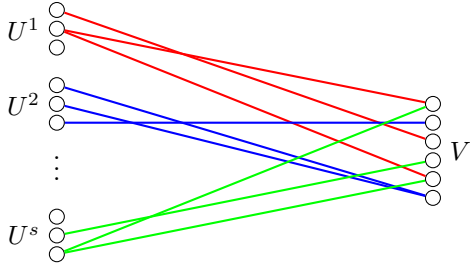

We now show that the $s$ sets $X_i$ simultaneously intersect if and only if there is an edge $\{a,b\}\in Z$ that is missing from all $s$ copies of the graph $Z$ in the spanner $H$. It's clear that if $j\in\bigcap_{i=1}^s X_i$, then the edge $\{a,b\}\in Z$ indexed by $j$ cannot be in the spanner $H$. Now suppose that no copy of the edge $\{a,b\}\in Z$ indexed by $j$ is in the spanner $H$ and suppose for contradiction that $j\notin X_i$ for some $i$, that is, the edge $\{a,b\}\in G$ belonged to some player $P^i$. We claim that there is no path of length at most $3$ in the spanner $H$, contradicting that $H$ is a multiplicative $3$-spanner. Indeed, suppose for contradiction that a path $(a,i),v_1,(v_2,i'),b$ were in the spanner. Then $v_1\neq b$ since we assumed that there were no copies of $\{a,b\}$ in the spanner, and similarly, $v_2\neq a$. Then since $\{a,b\}$ is also an edge of $Z$, this implies that there is a $4$-cycle $a,v_1,v_2,b$ in $Z$, which contradicts that $Z$ is of girth $6$.

We thus conclude that the randomized communication complexity of this problem is
\begin{eqn}
    \Omega(sm) = \Omega\parens*{s\frac{n^{3/2}}{\sqrt{s}}} = \Omega(\sqrt{s}n^{3/2}),
\end{eqn}
as desired.
\end{proof}

\begin{remark}
For bipartite biregular graphs of larger girth, the optimal size of the graph is unknown. However, a simple counting argument known as the \emph{Moore bound} gives a lower bound on the number of vertices of a bipartite biregular graph of prescribed bidegree and girth \cite{Hoory02, FilipovskiRJ19, araujo2019bipartite}, which shows limitations of the above technique for proving communication lower bounds for multiplicative spanners of larger distortion. More specifically, let $g = 2k+2$ be the girth and let the two degrees be $\{d,sd\}$ (as we require one side of the bipartite graph to be of size $\Theta(n/s)$ in our proof technique). Then the Moore bound states that when $k$ is odd, then the number of vertices is at least
\begin{eqn}
    n = \Omega\parens*{(sd)^{(k+1)/2}d^{(k-1)/2}} = \Omega(s^{(k+1)/2}d^{k})
\end{eqn}
which implies that we can't get a bound better than $\Omega(sdn) = \Omega(s^{1/2-1/2k}n^{1+1/k})$. We will in fact be able to show this bound under the Erd\H{o}s girth conjecture for all $k$, as we show next. On the other hand, when $k$ is even, then the Moore bound is
\begin{eqn}
    n = \Omega((sd)^{k/2}d^{k/2}) = \Omega(s^{k/2}d^k)
\end{eqn}
which implies that the best we can do is $\Omega(sdn) = \Omega(s^{1/2}n^{1+1/k})$, which is slightly better than the previous bound. However, it is known that these Moore bounds are not tight everywhere, and counterexamples exist in some limited parameter regimes, e.g.\ Theorem 4 of \cite{de1997dense}.

Finally, we note that more robust versions of the Moore bound have been shown for bipartite biregular graphs \cite{Hoory02}, where an analogue of the above bound holds even for irregular graphs.
\end{remark}

For $k\geq 3$, one can implement the multiplicative spanner algorithm of \cite{baswana2007simple} to get asymptotically better dependence on the number of servers $s$ than the lower bound of Theorem \ref{thm:mult-with-dup}. This separates the with edge duplication model from the without duplication model for all $k$, given the lower bound of \cref{thm:mult-with-dup}.

\begin{theorem}\label{thm:mult-without-dup-ub}
For $k\geq 3$, the communication cost of the multiplicative $(2k-1)$-spanner problem without edge duplication is $\tilde O(ks^{1-2/k}n^{1+1/k} + snk)$.
\end{theorem}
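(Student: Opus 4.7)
The plan is to implement the cluster-cluster joining variant of the Baswana-Sen algorithm \cite{baswana2007simple} in the message-passing model. The variant maintains cluster families $\mathcal{C}_0 = \{\{v\}: v \in V\} \supseteq \mathcal{C}_1 \supseteq \cdots \supseteq \mathcal{C}_{k-1}$ by independently subsampling each cluster of $\mathcal{C}_{i-1}$ with probability $n^{-1/k}$ to form $\mathcal{C}_i$ (using shared randomness broadcast by the coordinator at $O(\log n)$ cost). At level $i$, for each unsampled cluster $C' \in \mathcal{C}_{i-1} \setminus \mathcal{C}_i$ and each adjacent sampled cluster $C \in \mathcal{C}_i$, we add one witness edge between $C'$ and $C$ to the spanner. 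The standard Baswana-Sen analysis shows this yields a multiplicative $(2k-1)$-spanner with $\tilde O(n^{1+1/k})$ edges added per level in expectation.

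In the distributed implementation I would process one level at a time. At the start of level $i$, the coordinator samples $\mathcal{C}_i$ and broadcasts the cluster-assignment function to all $s$ players; this costs $\tilde O(sn)$ bits per level and $\tilde O(snk)$ bits in total, accounting for the additive $snk$ term. The players then cooperate to send one witness edge to the coordinator per adjacent cluster pair $(C',C)$ with $C' \in \mathcal{C}_{i-1}\setminus\mathcal{C}_i$ and $C \in \mathcal{C}_i$.

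The main obstacle is bounding this witness-gathering stage by $\tilde O(s^{1-2/k} n^{1+1/k})$ per level; the naive protocol, where each player sends one candidate per cluster pair in which it holds an edge, costs $\tilde O(s \cdot n^{1+1/k})$, a factor of $s^{2/k}$ too large. To recover the $s^{1-2/k}$ factor, I would use a degree-threshold argument in the spirit of \cref{plus2without}: introduce a threshold $t$ and split vertices (equivalently, level-0 clusters) into low-degree ones (with at most $t$ incident edges) and high-degree ones. Since each edge sits on exactly one server, all edges incident to low-degree vertices can be transmitted directly at total cost $\tilde O(nt)$, which covers witnesses for every cluster pair involving a low-degree endpoint. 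For cluster pairs whose edges are all between high-degree vertices, the Baswana-Sen analysis bounds the number of such pairs by $\tilde O(n^{1+1/k}/t)$ per level, and a single-representative voting protocol (for instance, each player sends its minimum-hash candidate edge per pair and the coordinator keeps the global minimum) resolves them at polylog cost per pair. Choosing $t = s^{1-2/k} n^{1/k}$ balances the two contributions at $\tilde O(s^{1-2/k} n^{1+1/k})$ per level.

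Summing over the $k$ levels and adding the broadcast term gives the claimed $\tilde O(ks^{1-2/k}n^{1+1/k} + snk)$. Correctness of the output spanner follows directly from the Baswana-Sen stretch analysis, since the distributed protocol faithfully simulates cluster-cluster joining; a union bound over the $k$ levels absorbs the high-probability sampling and hashing guarantees.
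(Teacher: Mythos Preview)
Your proposal has a genuine gap in the witness-gathering analysis. The two key claims you rely on are both unjustified and, as stated, false.

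First, the assertion that ``the Baswana-Sen analysis bounds the number of such [high-degree-only cluster] pairs by $\tilde O(n^{1+1/k}/t)$ per level'' does not follow from anything in \cite{baswana2007simple}. The standard analysis bounds the number of witness edges per level by $\tilde O(n^{1+1/k})$, full stop; restricting attention to edges whose endpoints both have degree $>t$ does not shrink this. For a concrete obstruction, take a $(t{+}1)$-regular graph: every vertex is high-degree, the low-degree step transmits nothing, and you are left handling all $\tilde O(n^{1+1/k})$ pairs. Second, your voting protocol does not cost ``polylog per pair''. In the without-duplication model each of the $s$ players may hold an edge in a given pair and hence send a candidate, so the cost is $\tilde O(s)$ per pair. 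Combining these correct bounds gives $\tilde O(sn^{1+1/k})$ per level, which is exactly the naive cost you were trying to beat; the threshold $t$ buys you nothing here.

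The paper does share your degree threshold $d_1=s^{1-2/k}n^{1/k}$ for Phase~1, but then departs from the standard $n^{-1/k}$ sampling: it samples initial cluster centers with probability $\approx 1/d_1$ and, in subsequent rounds, subsamples with probability $\approx 1/d_2$ where $d_2=n^{1/k}/s^{2/k}=d_1/s$. The point of this rebalancing is that a vertex failing to join a sampled cluster is now adjacent to only $\tilde O(d_2)$ previous-level clusters (rather than $\tilde O(n^{1/k})$), so even paying the honest $s$ per adjacency costs $s\cdot d_2=s^{1-2/k}n^{1/k}$ per vertex. Likewise, the heavier first-round sampling leaves only $\tilde O(n^{1-\ell/k}/s^{1/k})$ clusters at the end, so the final all-pairs joining at cost $s$ per pair also lands on $\tilde O(s^{1-2/k}n^{1+1/k})$. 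The $s^{2/k}$ savings comes from this asymmetric choice of sampling rates, not from any reduction in the number of high-degree pairs.
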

\begin{proof}
The result is obtained just by balancing parameters in the cluster-cluster joining algorithm of \cite{baswana2007simple}. The details are deferred to \cref{section:bs-mult-spanner}.
\end{proof}
\begin{remark}
If we instead implement the vertex-cluster joining version of the algorithm of \cite{baswana2007simple}, then the bound weakens to $\tilde O(ks^{1-1/k}n^{1+1/k}+snk)$
\end{remark}

We also show that in this model, a polynomial dependence on the parameter $s$ is necessary. The lower bound matches the algorithm of Theorem \ref{thm:mult-without-dup-ub} exactly for $k=3$ up to polylog factors, giving a communication complexity of $\tilde\Theta(s^{1/3}n^{4/3})$ in this case. For general $k$, the bounds are off by a factor of $\tilde O(s^{1/2-3/2k})$. Interestingly, this technique is not able to get us tight results for $k=2$, giving a lower bound of $\Omega(s^{1/4}n^{3/2})$ instead.

\begin{theorem} \label{thm:mult-without-dup-lb}
Under Erd\H{o}s' girth conjecture, the randomized communication cost of the multiplicative $(2k-1)$-spanner problem without edge duplication is $\Omega(s^{1/2 - 1/2k}n ^ {1+ 1/k} + sn)$.
\end{theorem}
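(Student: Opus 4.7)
The plan is to reduce from $s$-player set disjointness (Theorem \ref{thm:setdisjointness}) by generalizing the vertex-splitting construction of Theorem \ref{thm:mult-3-lb} to arbitrary (not necessarily bipartite) graphs of large girth. Under Erd\H{o}s' girth conjecture, let $G^*$ be a graph on $N = \Theta(n/\sqrt{s})$ vertices with girth at least $2k+2$ and $M = \Theta(N^{1+1/k})$ edges. I create $t = \lceil \sqrt{s}\rceil$ copies of each vertex of $G^*$, yielding a vertex set $V' = V(G^*) \times [t]$ of size $tN = \Theta(n)$, and I fix an injection $i \mapsto (a_i, b_i) \in [t]^2$ assigning each of the $s$ players a distinct ordered pair of copy indices. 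Given a set disjointness instance $X_1, \dots, X_s \subseteq [M]$, player $P^i$ holds the edge $\{(u, a_i), (v, b_i)\}$ for each edge $\{u, v\} \in E(G^*)$ (indexed by $j$) with $j \notin X_i$; the distinctness of the pairs $(a_i, b_i)$ guarantees that no two players hold the same edge, so this is a valid without-duplication instance.

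The correctness claim is that any multiplicative $(2k-1)$-spanner $H$ of this input graph must contain at least one copy of every $G^*$-edge $\{u, v\}$ appearing in any player's input. Suppose otherwise: some player $P^{i_0}$ holds $\{(u, a_{i_0}), (v, b_{i_0})\}$, yet no copy of $\{u, v\}$ lies in $H$. The spanner property then yields a path in $H$ of length at most $2k-1$ from $(u, a_{i_0})$ to $(v, b_{i_0})$. Projecting via $(x, c) \mapsto x$ produces a walk in $G^*$ from $u$ to $v$ of length at most $2k-1$, since every input edge has the form $\{(u', a), (v', b)\}$ for some $\{u', v'\} \in E(G^*)$, which projects to a $G^*$-edge. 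Appending the edge $\{u, v\} \in E(G^*)$ closes this walk, giving a closed walk in $G^*$ of length at most $2k$. Because the girth of $G^*$ is at least $2k+2$, a standard parity argument shows that any such closed walk must traverse every edge an even number of times: otherwise the edges of odd multiplicity form a nonempty even-degree subgraph of $G^*$, which contains a cycle of length at most the walk length $\leq 2k$, contradicting the girth. In particular, the closing edge $\{u, v\}$ must also appear in the projected walk at least once, and the corresponding $H$-edge is a copy of $\{u, v\}$ in $H$, contradicting the assumption.

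From this claim the coordinator can decide set disjointness by inspecting $H$: $\bigcap_i X_i \neq \emptyset$ iff some edge $\{u, v\} \in E(G^*)$ has no copy in any player's input, iff (by the contrapositive of the claim) some $\{u, v\}$ has no copy in $H$. Theorem \ref{thm:setdisjointness} then yields a communication lower bound of $\Omega(sM) = \Omega\bigl(s \cdot (n/\sqrt{s})^{1+1/k}\bigr) = \Omega(s^{1/2 - 1/(2k)} n^{1+1/k})$. The additive $\Omega(sn)$ term follows from a simpler, separate set disjointness reduction (for example on a perfect matching of size $\Theta(n)$, in which every edge trivially must lie in any spanner), again invoking Theorem \ref{thm:setdisjointness}.

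The main obstacle is the girth/parity step---making precise why a closed walk in $G^*$ of length strictly less than the girth must retrace every traversed edge an even number of times, and cleanly tying this to the existence of a second copy of $\{u,v\}$ among the edges of the $H$-path. Once this is in hand, the reduction proceeds as a direct generalization of Theorem \ref{thm:mult-3-lb}: the template $G^*$ is drawn from the general girth-conjecture family (rather than a biregular bipartite construction), and taking $\sqrt{s}$ copies per vertex (rather than $s$ copies of a single bipartite side) is precisely what produces the $s^{1/2 - 1/(2k)}$ dependence.
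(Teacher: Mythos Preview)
Your argument for the $\Omega(s^{1/2-1/2k}n^{1+1/k})$ term is correct, and it takes a somewhat different route from the paper. The construction is the same in both: take a girth-$(2k+2)$ graph $G^*$ on $\Theta(n/\sqrt s)$ vertices, replace each vertex by $\sqrt s$ copies, and assign each player's copy of an edge to a distinct edge of the blown-up graph. The analyses diverge. The paper treats this as a black-box reduction from the \emph{with}-duplication spanner problem to the \emph{without}-duplication problem: given a with-duplication instance $G$ on $n/\sqrt s$ vertices, build the blown-up $G'$, compute a spanner $H'$ of $G'$, and contract back---the projected $H$ is a $(2k-1)$-spanner of $G$ simply because every path in $H'$ projects to a walk in $G$ of the same length. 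The with-duplication lower bound of Theorem~\ref{thm:mult-with-dup} then finishes. You instead reduce directly to set disjointness and establish a stronger structural fact---that any spanner of the blown-up graph must contain at least one lifted copy of each $G^*$-edge present in the input---via your closed-walk parity argument. The paper's route is shorter and more modular; yours is self-contained and extracts more information about the spanner's structure.

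Your handling of the additive $\Omega(sn)$ term has a gap. The ``perfect matching via set disjointness'' reduction you sketch is a with-duplication argument: giving player $i$ the matching edges indexed by $[n/2]\setminus X_i$ duplicates every edge whose index is absent from two players' sets. In the without-duplication model there is no way to run this directly, and applying your own vertex-splitting trick to a matching yields only $\Omega(\sqrt s\, n)$, not $\Omega(sn)$. The paper instead invokes the $\Omega(sn)$ lower bound for graph connectivity in the without-duplication model from \cite{woodruffzhang13}, which has an independent (and non-trivial) proof; since any spanner certifies connectivity, the bound transfers immediately.
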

\begin{proof}
We prove the $\Omega(s^{1/2 - 1/2k}n ^ {1+ 1/k})$ lower bound via a reduction from the lower bound for the multiplicative $(2k-1)$-spanner problem \emph{with} duplication.

Consider an instance $G$ of the multiplicative $(2k-1)$-spanner problem \emph{with} duplication on $s$ servers and $n/s^{1/2}$ vertices. We then construct an instance of multiplicative $(2k-1)$-spanner problem \emph{without} duplication on $s$ servers and $n$ vertices as follows.

We first construct a product graph $G'$ on $n$ vertices by replacing every vertex $v$ in $G$ with a set $S_v$ of $s^{1/2}$ vertices. Then, for a pair of vertices $\{u,v\}$ in the original graph $G$, there are $s$ distinct edges between the two corresponding groups of vertices $\{S_u,S_v\}$. Now note that there are at most $s$ copies of each edge $\{u,v\}$ in the original graph $G$ across all the servers, so we can deterministically assign each server's copy of an edge $\{u,v\}\in E(G)$ to a distinct edge $\{u',v'\}\in E(G')$ for $u'\in S_u$ and $v'\in S_v$ without any additional communication (Figure \ref{fig:dup-to-no-dup}).
\begin{figure}
    \centering
    \begin{tikzpicture}
        \node[anchor=south] at (0,0.3) {$u$};
        \node[anchor=south] at (3,1.3) {$v$};
        \node[shape=circle,draw=black] (A) at (0,0) {};
        \node[shape=circle,draw=black] (B) at (3,1) {};
        \path (A) edge[thick,blue] (B);
        \path (A) edge[thick,red,bend left] (B);
        \path (A) edge[thick,green,bend right] (B);

        \node[anchor=south] at (5,0.5) {$S_u$};
        \node[anchor=south] at (8,1.5) {$S_v$};
        \node[shape=circle,draw=black] (A1) at (5,-0.3) {};
        \node[shape=circle,draw=black] (A2) at (5,0.3) {};
        \node[shape=circle,draw=black] (A3) at (4.7,0) {};
        \node[shape=circle,draw=black] (A4) at (5.3,0) {};
        \node[shape=circle,draw=black] (B1) at (8,1.3) {};
        \node[shape=circle,draw=black] (B2) at (8,0.7) {};
        \node[shape=circle,draw=black] (B3) at (8.3,1) {};
        \node[shape=circle,draw=black] (B4) at (7.7,1) {};
        \path (A4) edge[thick,blue] (B4);
        \path (A2) edge[thick,red,bend left] (B1);
        \path (A1) edge[thick,green,bend right] (B2);
    \end{tikzpicture}

    \caption{Converting an instance with edge duplication to one without.}
    \label{fig:dup-to-no-dup}
\end{figure}
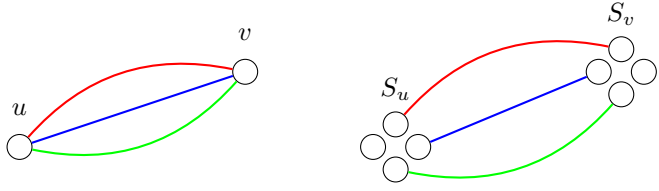
This is now an instance of multiplicative $(2k-1)$-spanner problem \emph{without} duplication on $s$ servers and $n$ vertices as follows, so we can run any algorithm $\mathcal A$ on this instance so that the coordinator ends up with a multiplicative $(2k-1)$-spanner $H'$ of $G'$. Finally, the coordinator constructs a multiplicative $(2k-1)$-spanner $H$ of $G$ by including an edge $\{u,v\}\in E(G)$ in $H$ if and only if there is an edge between $S_u$ and $S_v$ in $H'$.

We show that this is a correct protocol for computing a multiplicative $(2k-1)$-spanner $H$ of $G$. Consider an edge $\{u,v\}\in E(G)$. Then, there is an edge $\{u',v'\}$ between $S_u$ and $S_v$ in $G'$ by construction. Thus, there exists a path of length at most $(2k-1)$ in $H'$ between $u'$ and $v'$. Now every edge in this path is an edge between two groups of vertices $S_{w_1}$ and $S_{w_2}$, in which case we have included the edge $\{w_1,w_2\}$ in $H$. Thus, we have included a path of length at most $(2k-1)$ between the vertices $u$ and $v$ in $H$.

By the lower bound of Theorem \ref{thm:mult-with-dup}, $\mathcal A$ requires
\begin{eqn}
    \Omega\parens*{s\parens*{\frac{n}{s^{1/2}}}^{1+\frac1k}} = \Omega\parens*{s^{\frac12 - \frac1{2k}}n^{1+\frac1k}}
\end{eqn}
communication, as desired.

As before, the $\Omega(sn)$ communication lower bound for graph connectivity from \cite{woodruffzhang13} implies that this bound can be strengthened to $\Omega(s^{1/2-1/2k}n^{1+1/k} + sn)$.
\end{proof}
%

\subsection{Simultaneous Communication of Multiplicative \texorpdfstring{$(2k-1)$}{(2k-1)}-Spanners}
We now prove our results for simultaneous communication for multiplicative $(2k-1)$-spanners.

Our algorithm comes from observing that for multiplicative spanners, each server can just locally compute a multiplicative $(2k-1)$-spanner of size $O(n^{1+1/k})$ and send it to the server for a $\tilde O(sn^{1+1/k})$ communication algorithm, which turns out to be optimal.

\begin{theorem}
The deterministic simultaneous communication complexity of multiplicative $(2k-1)$-spanners problem with duplication is $\tilde O(sn^{1+1/k})$.
\end{theorem}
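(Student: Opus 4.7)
The plan is to exhibit a simple protocol in which each server independently computes a local spanner and forwards it to the coordinator, who then takes the union. Specifically, server $P^i$ computes a multiplicative $(2k-1)$-spanner $H_i$ of the subgraph $(V, E_i)$ using the classical greedy construction of \cite{a85}, which guarantees $|E(H_i)| = O(n^{1+1/k})$. Each server then sends $H_i$ to the coordinator in a single simultaneous round, and the coordinator outputs $H = (V, \bigcup_{i=1}^s E(H_i))$.

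For correctness, I would argue that the union of spanners of a covering of edge sets is itself a spanner of the union. Concretely, take any edge $\{u,v\} \in E = \bigcup_i E_i$; there is some index $i$ with $\{u,v\} \in E_i$, so $d_{H_i}(u,v) \le 2k-1$, and since $H_i \subseteq H$ we get $d_H(u,v) \le 2k-1$. Then for an arbitrary pair $x,y \in V$, apply this edgewise bound along a shortest $x$-$y$ path in $G$ and sum to obtain $d_H(x,y) \le (2k-1)\cdot d_G(x,y)$, which is exactly the multiplicative $(2k-1)$-spanner guarantee. Duplication does not cause any difficulty, since the argument only needs each edge of $G$ to be contained in at least one $E_i$.

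For the communication bound, each spanner $H_i$ has $O(n^{1+1/k})$ edges, and each edge is describable by two vertex labels in $O(\log n)$ bits, so each server sends $\tilde{O}(n^{1+1/k})$ bits. Summing over the $s$ players gives a total simultaneous communication of $\tilde{O}(s n^{1+1/k})$ bits, as required. There is no real obstacle here: the key observation is simply that the spanner property is preserved under taking unions over edge-wise partitions (or coverings), which makes the simultaneous model already sufficient for the with-duplication setting. The matching lower bound (treated separately) is what requires the more delicate biregular/Erd\H{o}s-girth constructions; the upper bound side is essentially immediate once one notes this preservation-under-union property.
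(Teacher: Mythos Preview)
Your proposal is correct and matches the paper's approach exactly: the paper also has each server locally compute a multiplicative $(2k-1)$-spanner of its subgraph (of size $O(n^{1+1/k})$) and send it to the coordinator, who takes the union. Your write-up in fact supplies more detail than the paper does, including the edgewise correctness argument showing that a union of $(2k-1)$-spanners of the $E_i$ is a $(2k-1)$-spanner of $\bigcup_i E_i$.
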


To prove the lower bound, we will make use of the following lemma.

\begin{lemma}\label{lem:disj-girth-graphs}
Let $s = o(n^{1/3-1/3k})$. Then under the Erd\H{o}s girth conjecture, there exist $s$ pairwise edge-disjoint graphs $E_1,E_2,\dots,E_s$ on $n$ vertices and $\Theta(n^{1+1/k})$ edges, each of girth $2k+2$.
\end{lemma}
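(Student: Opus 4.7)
My plan is to use the probabilistic alteration method. I would first invoke the Erd\H{o}s girth conjecture to fix an extremal graph $G_0$ on $n$ vertices with girth $2k+2$ and $m = \Theta(n^{1+1/k})$ edges, then sample $s$ independent uniformly random permutations $\pi_1,\dots,\pi_s$ of $[n]$ and set $G_i \coloneqq \pi_i(G_0)$, the relabeling of $G_0$ by $\pi_i$. Each $G_i$ is isomorphic to $G_0$, so both the edge count and the girth transfer automatically; the only remaining task is to prune a small number of edges so that the $G_i$ become pairwise edge-disjoint without dropping below $\Theta(n^{1+1/k})$ edges.

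The second step is to control pairwise overlaps in expectation. For any fixed edge $e \in E(G_i)$, the preimage $\pi_j^{-1}(e)$ is a uniformly random $2$-subset of $[n]$, so $\Pr[e \in E(G_j)] = m/\binom{n}{2} = O(n^{-1+1/k})$, and summing over $e \in E(G_i)$ gives $\mathbb{E}|E(G_i) \cap E(G_j)| = O(m^2/n^2) = O(n^{2/k})$. Markov combined with a union bound over the $\binom{s}{2}$ pairs then produces, with positive probability, the simultaneous bound $|E(G_i) \cap E(G_j)| = O(s^2 n^{2/k})$ for all $i \neq j$.

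The third step is the alteration itself. Conditioning on the event above, for each edge $e$ that appears in more than one $G_i$ I would keep it only in the copy with smallest index and delete it from the rest; call the resulting graphs $G_1',\dots,G_s'$, which are pairwise edge-disjoint by construction. Each $G_i'$ loses at most $\sum_{j \neq i} |E(G_i) \cap E(G_j)| = O(s^3 n^{2/k})$ edges, and the hypothesis $s = o(n^{1/3 - 1/3k})$ rearranges to $s^3 n^{2/k} = o(n^{1+1/k})$, so each $G_i'$ retains $(1-o(1))m = \Theta(n^{1+1/k})$ edges. Girth is monotone under edge deletion, so every $G_i'$ has girth at least $2k+2$; moreover, the Moore bound forces any graph on $n$ vertices of girth $\geq 2k+3$ to have only $O(n^{1+1/(k+1)}) = o(n^{1+1/k})$ edges, so the girth of $G_i'$ is in fact exactly $2k+2$.

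The main obstacle is the gap between expected and simultaneous worst-case overlap. Markov plus a union bound over $\binom{s}{2}$ pairs costs a factor of $s^2$, and then summing the per-pair bounds over the $s-1$ indices $j \neq i$ costs another factor of $s$, producing the cubic $s^3$ that dictates the hypothesis $s = o(n^{1/3-1/3k})$. Sharper concentration for permutation statistics would likely relax this toward the natural edge-packing limit $s = o(n^{1-1/k})$, but the weaker bound is already enough to support the simultaneous communication lower bound the lemma is intended to feed into.
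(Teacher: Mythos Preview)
Your proposal is correct and matches the paper's proof essentially step for step: random relabelings of a fixed girth-conjecture graph, the $O(n^{2/k})$ expected pairwise overlap, Markov with an $s^2$ slack plus a union bound over $\binom{s}{2}$ pairs, and then summing the $s-1$ overlaps to get the $O(s^3 n^{2/k}) = o(n^{1+1/k})$ deletion bound. The one extra ingredient you supply is the Moore-bound argument that the pruned graphs have girth \emph{exactly} $2k+2$ rather than at least $2k+2$; the paper does not bother with this, since only the lower bound on girth is needed downstream.
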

\begin{proof}
Under the Erd\H{o}s girth conjecture, there exists a graph $G$ on $n$ vertices with $\Theta(n^{1+1/k})$ edges and girth $2k+2$. We now choose the $s$ pairwise edge-disjoint graphs on $n$ vertices as follows. First draw a random permutation of $G$ for each server $P^j$ for $j\in[s]$ by drawing a random permutation $\pi_j:[n]\to[n]$ of the vertices and giving the server $P^j$ the edge set $\braces*{\{\pi_j(u),\pi_j(v)\} : \{u,v\}\in E(G)}$. To get pairwise edge-disjoint graphs, we now just delete any shared edges to produce our final edges $E_j$ for $j\in[s]$.

Note that any subgraph of $G$ also has girth at least $2k-1$, so $E_j$ has girth at least $2k+2$ for all $j\in[s]$. It remains to show that in our parameter regime, this yields graphs of the desired size. Fix two distinct players $P^i,P^j$ and edges $e_1\in E_i$ and $e_2\in E_j$. Then, the probability that $e_1$ collides with $e_2$ is $(n-2)!/n! = 1/n(n-1)$ and thus the expected number of edges shared between $P^i$ and $P^j$ is $n^{1+1/k}n^{1+1/k}/n(n-1) = \Theta(n^{2/k})$. By Markov's inequality, we delete $O(s^2n^{2/k})$ edges between these two players with probability at least $1-O(s^{-2})$. By the union bound, this is true simultaneously for all pairs of players with positive probability. In this event, each player deleted at most $s\cdot O(s^2n^{2/k}) = O(s^3n^{2/k}) = o(n^{1-1/k}n^{2/k}) = o(n^{1+1/k})$ edges, so each player still has a graph of size $\Theta(n^{1+1/k})$.
\end{proof}

Our lower bound, intuitively, now comes from either giving each of the players the above graphs with probability $1/2$, or giving only one player one of the above graphs with probability $1/2$, so that everyone always has to send their entire graph.

We first show the following tight lower bound for the communication complexity of multiplicative spanners for two players, even for the problem of just computing a multiplicative $(2k-1)$-approximation of all pairwise distances. That is, we consider the weaker version of the problem where, given a graph $G$, we only require the coordinator to put an oracle $O_G$ that, when queried for two vertices $v$ and $w$, returns $O_G(v,w)$ such that $d_G(v,w)\leq O(v,w)\leq (2k-1)d_G(v,w)$.

\begin{theorem}\label{thm:2-player-mult}
Suppose there exists a graph $G_n$ on $n$ vertices with girth $2k+2$ and $m = \Omega(n^{1+1/k})$ edges. Then, the randomized communication cost of the multiplicative $(2k-1)$-approximate distance oracle problem without edge duplication is $\Omega(n ^ {1+ 1/k})$.
\end{theorem}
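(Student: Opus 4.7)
The plan is to reduce from the INDEX problem on $m$ bits, where $m = \Theta(n^{1+1/k})$ is the edge count of the hypothesized graph $G_n$. Since the one-way randomized communication complexity of INDEX is known to be $\Omega(m)$, this will yield the desired $\Omega(n^{1+1/k})$ lower bound. I would fix an enumeration $e_1, \dots, e_m$ of the edges of $G_n$ known to both players in advance. In the reduction, Alice (holding $x \in \{0,1\}^m$) contributes the subgraph $G = \{e_j : x_j = 1\}$ of $G_n$, while Bob (holding an index $i \in [m]$) contributes no edges, so the edge-disjointness requirement is trivially satisfied.

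Alice, Bob, and the coordinator then simulate any protocol $\Pi$ for the $(2k-1)$-approximate distance oracle problem on $G$, after which the coordinator holds a valid oracle. Bob sends $i$ to the coordinator using $O(\log m)$ extra bits, and the coordinator evaluates the oracle at the endpoints $(u,v)$ of the edge $e_i$. The heart of the argument is the claim that this evaluation reveals $x_i$. If $x_i = 1$, then $e_i \in G$, so $d_G(u,v) = 1$ and the oracle returns a value at most $2k-1$. If instead $x_i = 0$, then $G \subseteq G_n \setminus \{e_i\}$, and any $u$-$v$ path in $G_n \setminus \{e_i\}$ together with the edge $e_i$ would form a cycle in $G_n$ of length at least $2k+2$ by the girth hypothesis; hence every such path has length at least $2k+1$, giving $d_G(u,v) \geq 2k+1$ and an oracle output of at least $2k+1 > 2k-1$. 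Thresholding at $2k-1$ therefore lets the coordinator (and hence Bob) recover $x_i$, solving INDEX.

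The total communication of the reduction is the cost of $\Pi$ plus $O(\log m)$ bits of overhead, so $\Pi$ must use $\Omega(m) = \Omega(n^{1+1/k})$ bits by the INDEX lower bound. Since this is a direct and standard reduction I do not foresee significant technical obstacles; the only subtlety to flag is that the hypothesis girth $2k+2$ (rather than $2k+1$) is used in an essential way, since it is precisely what forces a missing-edge distance of at least $2k+1$, strictly beyond the approximation factor, and hence distinguishable by any $(2k-1)$-approximate oracle.
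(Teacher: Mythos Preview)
Your approach via reduction from INDEX is different from the paper's, which argues information-theoretically: it draws a uniformly random subgraph $Z\subseteq G_n$, gives all of $Z$ to one player and nothing to the coordinator, and then uses Fano's inequality together with the Bar-Yossef--Jayram--Kumar--Sivakumar bound to show the transcript must carry $\Omega(m)$ bits of mutual information with $Z$. Both arguments rest on the same Thorup--Zwick observation that a $(2k-1)$-approximate distance oracle for a subgraph of a girth-$(2k+2)$ graph determines that subgraph exactly, and your girth computation is correct.

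There is, however, a gap in your reduction as written. The $\Omega(m)$ bound for INDEX holds only for \emph{one-way} protocols from Alice to Bob; two-way INDEX costs $O(\log m)$. But the distance-oracle protocol $\Pi$ you simulate may be fully interactive, so what your reduction actually produces is a two-way INDEX protocol of cost $|\Pi|+O(\log m)$, from which no useful lower bound on $|\Pi|$ follows. The fix is easy and you should state it explicitly: since Bob contributes no edges, the messages of Bob and of the coordinator in $\Pi$ depend only on the public randomness and the transcript so far, so Alice can simulate the entire execution of $\Pi$ by herself and send the resulting transcript to Bob in a single message. Bob then reconstructs the coordinator's oracle and queries it at $e_i$. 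This converts any interactive $\Pi$ into a one-way Alice-to-Bob INDEX protocol of the same cost, and now the one-way INDEX lower bound applies. With this observation added, your argument goes through; the paper's Fano-based proof simply sidesteps the issue by bounding the transcript length directly, without reference to directionality.
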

\begin{proof}
The proof follows easily from the observation of \cite{tz05} that $(2k-1)$-approximate distance oracles uniquely determine subgraphs of graphs of girth $2k+2$, along with standard arguments. The details are given in \cref{section:dist-oracle-lb}.
\end{proof}

We now amplify this two player result to the $s$ player simultaneous communication result in the following theorem.

\begin{theorem}
The randomized simultaneous communication complexity of multiplicative $(2k-1)$-approximate distance oracle problem without duplication is $\Omega(sn^{1+1/k})$.
\end{theorem}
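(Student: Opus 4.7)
The plan is to combine the two-player distance oracle lower bound of Theorem \ref{thm:2-player-mult} with a direct sum argument over the $s$ players, exploiting the crucial fact that in the simultaneous model each player's message depends only on its own input and the shared randomness. First I invoke Lemma \ref{lem:disj-girth-graphs} to obtain $s$ pairwise edge-disjoint graphs $E_1, \ldots, E_s$ on the same $n$-vertex set, each of girth $2k+2$ and with $\Theta(n^{1+1/k})$ edges. The hard input distribution assigns each player $i$ an independent uniformly random subset $X_i \subseteq E_i$; since the $E_i$ are edge-disjoint, so are the $X_i$, yielding a valid input for the no-duplication model.

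For each fixed player $i$, I will lower bound $\mathbb{E}[|M_i|]$ by extracting a one-way two-player protocol for the $(2k-1)$-approximate distance oracle problem on subgraphs of $E_i$. In this extraction, Alice plays the role of player $i$, holds $X_i$, and sends $M_i = f_i(X_i, R)$ to Bob; Bob plays both the coordinator and the other $s-1$ players with the fixed input $X_j = \emptyset$ for $j \neq i$. Because each $M_j$ with $j \neq i$ is a function only of $X_j = \emptyset$ and the shared randomness $R$, Bob can compute all of them locally and then run the coordinator's decoder on $(M_1, \ldots, M_s, R)$. On this simulated input the union is $G = \bigcup_j X_j = X_i$, so the worst-case correctness of the $s$-player protocol guarantees that Bob's output is a $(2k-1)$-approximate distance oracle for $X_i$ with the same success probability. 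Applying Theorem \ref{thm:2-player-mult} to $E_i$, which has the required size and girth, then forces $\mathbb{E}[|M_i|] = \Omega(n^{1+1/k})$ when $X_i$ is uniform on subsets of $E_i$.

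Because $M_i$ depends only on $X_i$ and $R$ and not on $X_j$ for $j \neq i$, this per-player bound continues to hold in the full joint distribution where all $X_j$ are drawn independently from the hard marginal. Summing over $i$ yields the claimed total $\mathbb{E}[\sum_i |M_i|] = \Omega(sn^{1+1/k})$. The main technical point to verify is the validity of the one-way reduction: this is where the simultaneous restriction is essential, since Bob needs no information from Alice to produce $M_j$ for $j \neq i$, whereas in an interactive model later rounds of $M_j$ could depend on Alice's transmissions and this extraction would break. Notably, no delicate girth analysis of the union graph $\bigcup_j X_j$ is required, because the reduction isolates one player at a time and only uses the girth of a single $E_i$.
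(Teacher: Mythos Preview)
Your proof is correct and uses the same two ingredients as the paper (Lemma~\ref{lem:disj-girth-graphs} and Theorem~\ref{thm:2-player-mult}), but your execution is noticeably more direct. The paper first passes to deterministic protocols via Yao's minimax; since a deterministic protocol need only succeed on average over the chosen distribution, the paper cannot immediately claim correctness on the ``isolated'' inputs where all but one player hold $\emptyset$. To force this, it builds a mixture $\mu = \tfrac12(\mu_1 + \mu_2)$ where $\mu_2$ concentrates on such isolated inputs, runs an averaging argument over the random choice of isolated player to extract a large set $S$ of players whose induced one-way protocols succeed with probability $\geq 2/3$, and only then transfers the per-player cost back to $\mu_1$ using that $M_j$ depends only on $(X_j,R)$. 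You bypass all of this by keeping the randomized protocol and exploiting its \emph{worst-case} correctness: since it must succeed with probability $\geq 2/3$ on every input, including each isolated one, your extracted $\Pi_i$ is automatically a valid one-way protocol for every $i$, with no mixture and no averaging. This is a genuine simplification.

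One small point: Theorem~\ref{thm:2-player-mult} as stated is a worst-case lower bound, whereas you invoke it to conclude $\mathbb{E}[|M_i|] = \Omega(n^{1+1/k})$ over \emph{uniform} $X_i \subseteq E_i$. That distributional statement is exactly what the Fano-based proof of Theorem~\ref{thm:2-player-mult} establishes, so your claim is correct, but you are implicitly appealing to the proof rather than the theorem as stated. If you prefer to use the theorem as a black box, you can drop the input distribution altogether: for each $i$ let $X_i^\ast$ be a worst-case input for $\Pi_i$, and take the single deterministic input $(X_1^\ast,\ldots,X_s^\ast)$; since $M_i$ depends only on $(X_i,R)$, the total expected cost on this input is $\sum_i \mathbb{E}_R[|M_i(X_i^\ast,R)|] = \Omega(sn^{1+1/k})$.
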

\begin{proof}
By the distributed version of Yao's minimax principle, Lemma 1 of \cite{woodruffzhang13}, it suffices to argue that there exists an input distribution such that any deterministic algorithm $\mathcal A$ requires an expected $\Omega(sn^{1+1/k})$ bits of simultaneous communication in order to succeed with probability at least $11/12$ over the randomness of the input distribution. Recall the graphs $E_1,E_2,\dots,E_s$ of \cref{lem:disj-girth-graphs}, each of size $m \coloneqq \Theta(n^{1+1/k})$.

Recall that by Yao's minimax principle, the worst case randomized cost is at least the expected cost of a deterministic algorithm over an input distribution. For each girth graph $E_j$, let $\mu(E_j)$ be an input distribution that witnesses this expected cost for any deterministic algorithm. Let $\mu_1$ be the input distribution that draws the input for player $P^j$ from $\mu(E_j)$ for every player $j\in[s]$. Let $\mu_2$ be the input distribution that draws a uniformly random index $J_*\sim[s]$ and gives $P^{J_*}$ a subset $H\subseteq E_{J_*}$ drawn from $\mu(E_{J_*})$ and everyone else the empty graph $\varnothing$. Now let $Z$ be a uniformly random bit. Then, we choose our input distribution to draw from $\mu_1$ if $Z = 0$ and $\mu_2$ if $Z = 1$.

Note that
\begin{eqn}
    \Pr_\mu\parens*{\text{$\mathcal A$ fails}} = \frac12\parens*{\Pr_\mu\parens*{\text{$\mathcal A$ fails}\mid Z = 0} + \Pr_\mu\parens*{\text{$\mathcal A$ fails}\mid Z = 1}}\leq \frac1{12}
\end{eqn}
so
\begin{eqn}
    \Pr_\mu\parens*{\text{$\mathcal A$ fails}\mid Z = 1} = \Pr_{\mu_2}\parens*{\text{$\mathcal A$ fails}} = \frac1s\sum_{j=1}^s \Pr_{\mu_2}\parens*{\text{$\mathcal A$ fails}\mid J_* = j} \leq \frac16
\end{eqn}
and thus by averaging, for at least half the indices $j\in[s]$, say the indices $S\subseteq[s]$, we have that the failure rate is $\Pr_{\mu_2}\parens*{\text{$\mathcal A$ fails}\mid J_* = j}\leq 1/3$. Now note that \cref{thm:2-player-mult} trivially implies that the one-way communication complexity of the multiplicative $(2k-1)$-distance oracle problem is $\Omega(n^{1+1/k})$ as well. Then these players $j\in S$ send an expected $\Omega(n^{1+1/k})$ over their respective input distributions $\mu(E_j)$. Thus, denoting the random variable for the communication of player $j$ by $C_j$, the required expected simultaneous communication is
\begin{eqn}
    \E_\mu\parens*{\sum_{j=1}^s C_j} &= \frac12\sum_{i=1}^s\E_\mu\parens*{C_j\mid Z=0} + \E_\mu\parens*{C_j\mid Z=1} \geq \frac12\sum_{j\in S}\E_\mu\parens*{C_j\mid Z=0} \\
    &= \frac12\sum_{j\in S}\E_{\mu_1}\parens*{C_j} = \frac12\sum_{j\in S}\E_{\mu(E_j)}\parens*{C_j} \geq \abs*{S}\Omega(n^{1+1/k}) = \Omega(sn^{1+1/k})
\end{eqn}
as desired.

\end{proof}

\subsection{Multiplicative \texorpdfstring{$(2k-1)$}{(2k-1)}-Spanners in the Dynamic Streaming Model}
Finally, we note that implementing the Baswana-Sen cluster-cluster joining algorithm \cite{baswana2007simple} in the turnstile streaming model gives a $(\floor{k/2}+1)$-pass algorithm.

\begin{theorem}\label{thm:spanner-dynamic}
There exists an algorithm for constructing a multiplicative $(2k-1)$-spanner using $\tilde O(n^{1+1/k})$ space and $\floor{k/2}+1$ passes in the dynamic streaming model.
\end{theorem}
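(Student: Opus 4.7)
The plan is to implement the cluster-cluster joining variant of Baswana-Sen \cite{baswana2007simple} in the turnstile streaming model by adapting the $\ell_0$-sampling sketch framework of \cite{AhnGM12a,AhnGM12b}. Recall that the Baswana-Sen algorithm maintains a sequence of clusterings $R_0, R_1, \dots$, where in phase $i$ each cluster in $R_{i-1}$ survives to $R_i$ with probability $n^{-1/k}$, each unsampled cluster is merged into one adjacent surviving cluster via a sampled bridging edge, and each unsampled cluster with no surviving neighbor emits one bridging edge to each adjacent cluster before being removed. The key structural property of the cluster-cluster variant---as opposed to the vertex-cluster variant implemented in \cite{AhnGM12b}---is that a single bridging edge can merge an entire cluster of radius $r$ into its neighbor, so the cluster radius can grow additively by up to $2$ per phase rather than by $1$. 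With an appropriate schedule this produces a final clustering in which every cluster has radius at most $k-1$ after $\floor{k/2}$ growth phases, and by the standard Baswana-Sen stretch argument---coupled with one final pass that emits one bridge per adjacent pair of final clusters---this yields a multiplicative $(2k-1)$-spanner in $\floor{k/2}+1$ passes.

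Each growth phase is realized in a single turnstile pass using the standard linear-sketch toolkit. At the start of the stream we initialize, for every vertex $v$, a collection of $O(\poly(\log n))$ independent $\ell_0$-samplers of the neighborhood-indicator vector of $v$. Linearity of sketching lets us combine the samplers of all vertices in a cluster $C$, with cancellation of internal edges, to obtain an $\ell_0$-sampler over the cut edges of $C$; restricting the coordinates further to those edges whose other endpoint lies in a specified cluster $C'$ yields a uniformly random bridging edge between $C$ and $C'$. Between passes, the algorithm runs the Baswana-Sen cluster-cluster bookkeeping locally on the sampled bridges: sampling cluster survivors with probability $n^{-1/k}$, selecting one bridging edge per adjacent cluster pair from the sketch outputs, performing the cluster-level merges, and appending the appropriate bridges to the growing spanner. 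The final spanner has $\tilde O(n^{1+1/k})$ edges by the standard Baswana-Sen size analysis, and the dominant space cost is the sketch storage of $\tilde O(n \cdot \poly(\log n))$ bits, both subsumed in the claimed $\tilde O(n^{1+1/k})$ space.

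The main obstacle is verifying that the randomized correctness analysis of \cite{baswana2007simple} is preserved when bridging edges are drawn via $\ell_0$-sketches rather than read directly from the graph, and in particular that the radius-per-phase accounting still yields the $\floor{k/2}+1$-pass bound. Each $\ell_0$-sampler produces a uniform bridging edge with failure probability $1/\poly(n)$; taking $\poly(\log n)$ independent copies and applying a union bound over the $O(n k)$ (source cluster, target cluster) pairs encountered across all $\floor{k/2}+1$ passes ensures that every required bridge is recovered correctly with high probability. Combined with the stretch and size guarantees of \cite{baswana2007simple} for the cluster-cluster joining variant, this yields the claimed $\tilde O(n^{1+1/k})$-space, $\floor{k/2}+1$-pass algorithm.
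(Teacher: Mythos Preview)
Your high-level plan---implement the Baswana--Sen cluster--cluster variant via $\ell_0$-samplers---is the same as the paper's, but your description of the growth phases is wrong in a way that breaks the stretch bound. You write that in each phase ``each unsampled cluster is merged into one adjacent surviving cluster'' so that ``the cluster radius can grow additively by up to $2$ per phase.'' Merging two radius-$r$ clusters across a single bridge edge yields radius $2r+1$, not $r+2$; after $\floor{k/2}$ such phases starting from singletons the radius is $2^{\floor{k/2}}-1$, and the final cluster--cluster joining step would then give stretch $4(2^{\floor{k/2}}-1)+1$, which exceeds $2k-1$ already at $k=4$.

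In the actual cluster--cluster variant (as in the paper and in \cite{baswana2007simple}), the growth phases are \emph{vertex-level}: in each pass, individual vertices join an adjacent sampled cluster, so the radius grows by exactly $1$ per pass. The name ``cluster--cluster joining'' refers only to the \emph{final} pass, which emits one bridge per adjacent pair of surviving clusters; this is where the pass savings over the vertex--cluster variant come from, since radius-$\ell$ clusters joined this way already give stretch $4\ell+1$, so only $\ell\approx k/2$ growth passes are needed rather than $k-1$. The paper further treats odd and even $k$ separately---for even $k=2\ell$, the final pass joins level-$(\ell-1)$ clusters to level-$(\ell-2)$ clusters asymmetrically---to land on exactly $2k-1$ stretch. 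A smaller point: $O(\poly(\log n))$ samplers per vertex are not enough to emit one edge to \emph{each} of the $\tilde O(n^{1/k})$ adjacent clusters when a vertex fails to join a sampled cluster (the AGM linearity trick yields a single cut sample, not one per target cluster); the paper keeps $\tilde O(n^{1/k})$ samplers per vertex, for $\tilde O(n^{1+1/k})$ total sketch space, still within the stated bound.
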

\begin{proof}
We defer the details to \cref{section:spanner-dynamic}.
\end{proof}

The space-distortion tradeoff here is optimal under the Erd\H{o}s girth conjecture, as graphs given by this conjecture must output themselves as spanners, which takes $\Omega(n^{1+1/k})$ bits of space.

\section{Conclusions}
We initiated the study of communication versus spanner quality in the message-passing model of communication, in which the edges of a graph are arbitrarily distributed, with or without duplication, across two or more players, and the players wish to execute a low
communication protocol to compute a spanner. We believe there are several surprising aspects of these problems illustrated by our work, illustrating separations between models with and without edge duplication.

One open question is whether it is possible to obtain an additive spanner with constant distortion with $O(n^{4/3})$ communication for constant $s$. We show it is possible to obtain $O(n^{3/2})$ communication and constant distortion, but in the non-distributed setting it is possible to obtain an additive $6$-spanner with $O(n^{4/3})$ edges. Since known constructions involve computing many partial breadth-first search trees, we are not able to implement them in the message-passing model, nor are we able to exploit any of the literature for computing distributed BFS trees (see, e.g., \cite{a89}), without spending $\Omega(n^2)$ communication in the message-passing model. Yet another question is to extend our techniques to other notions of spanners, such as distance preservers \cite{coppersmith2006sparse} or mixed additive and multiplicative spanners \cite{ep04}; see also the $(k, k-1)$ spanners in \cite{BKMP05}.

\section{Acknowledgements}
We would like to thank Gregory Kehne, Roie Levin, Chen Shao and Srikanta Tirthapura for helpful discussions, as well as the anonymous reviewers for their useful feedback. D. Woodruff would like to acknowledge support from the Office of Naval Research (ONR) grant N00014-18-1-2562.

\bibliographystyle{alpha}
\bibliography{spanner_bib}

\appendix
\section{Proofs of Simple Lemmas}\label{section:simple-lemmas}
\begin{proof}[Proof of \cref{lemma:samplecover}]
For a fixed set $C \in \mathcal{C}$, the probability we do not sample any elements from $C$ is no more than
$ (1 - \frac{t}{|\mathcal{U}|})^ {\frac{|\mathcal{U}|}{t} \log |\mathcal{C} / \delta|}
\leq  e^{- k \log |\mathcal{C}|} =  \frac{\delta}{|\mathcal{C}|} $.
Taking a union bound over all sets in $\mathcal{C}$ yields the claim.
\end{proof}

\begin{proof}[Proof of \cref{lemma:bfs_communication}]
	We give a simple distributed protocol for BFS. The coordinator maintains a partial BFS tree through the algorithm and an active set of vertices, both initialized to be the starting vertex $x$. Repeat the following until the active set is empty. Every round, the coordinator broadcasts the current active set $A_t$ to all players, and subsequently removes them from the active set. Each player responds by sending the coordinator any neighbors of $A_t$ that have never been in the active set, along with some edge from each neighbor to $A_t$. The coordinator adds the new neighbors to the BFS tree appropriately (breaking ties between new edges arbitrarily), and adds the new neighbors to the next round's active set $A_{t+1}$.

	Each vertex in the graph is broadcast at most once by the coordinator as an active vertex to each player, and is sent by each player at most once to the coordinator (along with its accompanying edge). Thus the total communication of this protocol is $\tilde O(sn)$.
\end{proof}

\section{Additive \texorpdfstring{$k$}{k}-Spanner Lower Bounds}\label{section:additiveNoDupLB}
We give the details for the proofs of \cref{thm:additive-lb} (with duplication model) and \cref{thm:additiveNoDupLB} (without duplication model).

\begin{proof}[Proof of \cref{thm:additive-lb}]
Recall the extremal graph construction $G$ of \cite{abboud20164} that shows that additive $k$-spanners must have size $\Omega(n^{4/3-o(1)})$. This graph is constructed so that there are $m\coloneqq O(n^{4/3-o(1)})$ pairs of vertices $\{s,t\}$, each associated with a set of edges $C^{s,t}$ known as \emph{clique edge sets}, such that distinct pairs of vertices have disjoint clique edge sets (Claim 3 of \cite{abboud20164}), and every path between these pairs of vertices $\{s,t\}$ with addition distortion at most $k$ must include some edge from $C^{s,t}$ (Claim 5 of \cite{abboud20164}. Let $P$ denote the set of these special pairs of vertices.

We now use this construction to solve an instance of $s$-player set disjointness on $m$ elements using an algorithm for additive $k$-spanners in a similar way as \cref{lemma:propertysubgraph} as follows. We first give any non-clique edge set edge to the coordinator. Now suppose player $i$ is given the input set $X_i\subseteq[m]$. Then, we give player $i$ the entire clique edge set corresponding to the $j$th pair for $j\in[m]$ if and only if $j\notin X_i$. Note that for any given pair $\{s,t\}\in P$, if $C^{s,t}$ was given to player $i$, then an additive $k$-spanner must include some edge of $C^{s,t}$ as mentioned before, and the additive $k$-spanner will include no edge of $C^{s,t}$ otherwise. Thus, $\bigcap_i X_i = \varnothing$ if and only if the additive $k$-spanner output by the coordinator is the entire $G$ itself. Thus, computing the additive $k$-spanner requires $\Omega(sm) = \Omega(sn^{4/3-o(1)})$, as desired.
\end{proof}

\begin{proof}[Proof of \cref{thm:additiveNoDupLB}]
It was shown in the proof of Theorem 2 in \cite{abboud20164} that for any constant $k$ and any $\epsilon > 0$, there is a family of graphs $\mathcal{S}_n$ on $n$ vertices of size $2^{\Omega(n^{4/3} - \epsilon)}$ such that for any two distinct graphs $G_1$, $G_2$ in this family, there is some vertex pair $x,y$ such that $|d_{G_1}(x,y) - d_{G_2}(x,y)| \geq k$.

	Consider the distribution that samples a graph from $\mathcal{S}_n$ uniformly at random, assigns all its edges to non-coordinator player $A$, and assigns nothing to the coordinator $B$. Consider any deterministic distributed protocol for computing multiplicative additive $k$-spanners that requires $o(n^{4/3 - \epsilon})$ communication. Since the messages sent by $A$ in this protocol must uniquely identify which graph in $\mathcal{S}_n$ was sampled, at most $2^{o(n^{4/3 - \epsilon})}$ of the possible input graphs in $\mathcal{S}_n$ will produce transcripts from which $B$ can construct a valid spanner. Thus this protocol must fail with probability $1 - o(1)$. Lemma 1 of \cite{woodruffzhang13} (a distributed version of Yao's Lemma \cite{yao1977probabilistic}) implies that the randomized communication of this problem is $\Omega(n^{4/3 - \epsilon})$.

	It was also shown in \cite{woodruffzhang13} that deciding graph connectivity in the message passing model requires $\Omega(sn)$ bits. Thus, the communication of this problem has a lower bound of $\Omega(n^{4/3 - o(1)} + sn)$.
%
%
\end{proof}

\section{Greedy Algorithm Multiplicative Spanners in the Message-Passing Model}\label{section:greedy}
We give the details for the proof of \cref{thm:mult-with-dup}.
\begin{proof}
	For the upper bound, consider the greedy Algorithm \ref{alg:multkspanner}.
	\begin{algorithm}
		\caption{$\times(2k-1)$-spanner with edge duplication}
		\label{alg:multkspanner}
		\begin{algorithmic}[1]
			\Require $G = (V, E)$.
			\Ensure $H$, $(2k-1)$-spanner of $G$.
			\State Initialize $H = (V,F), F = \varnothing$.
			\For {$i$ in $[n]$}
			\For {$e \in E_i$}
			\If {$(V, F \cup \{e\})$ does not contain a cycle of length less than or equal to $2k$}
			\State $F = F \cup \{e\}$.
			\EndIf
			\EndFor
			\EndFor
			\State \Return $H = (V, F)$.
		\end{algorithmic}
	\end{algorithm}

	First we note that the algorithm will produce a $(2k-1)$-spanner. For each edge $(x,y) \in E$, if $(x,y) \notin F$, then $d_F(x,y) \leq 2k-1$ since including the edge $(x,y)$ would close a cycle of length $\leq 2k$. Thus the output $H$ is a $(2k-1)$-spanner.

	Next we argue that the algorithm can be implemented in the message-passing model with $O(s n^{1 + 1/ k})$ bits of communication. Each player in order decides which of its edges to include in the current version of $F$, then forwards the updated $F$ to the next player. By construction the graph produced by the algorithm has girth greater than $2k$. It is well known that graphs with girth greater than $2k$ have $O(n^{1 +  1/ k})$ edges (see e.g. \cite{ADDJS93}). Thus $F$ never has more than $O(n^{1 + 1/k})$ edges and the total amount of communication required is $\tilde{O}(s \cdot n ^ {1+ 1/k})$.

	For the lower bound, under the girth conjecture there is a family of graphs $G_n$ on $n$ vertices with girth $2k+1$ and $\tilde{\Omega}(n ^ {1 + 1/k})$ edges. Since the only multiplicative $k$-spanner of $G_n$ is $G_n$ itself, the lower bound follows from Lemma \ref{lemma:propertysubgraph}.
\end{proof}

\section{Baswana-Sen Multiplicative Spanner in the Message-Passing Model}\label{section:bs-mult-spanner}
We give the details for Theorem \ref{thm:mult-without-dup-ub}.

\begin{proof}[Proof of \cref{thm:mult-without-dup-ub}]
Following the cluster-cluster joining algorithm of \cite{baswana2007simple}, we give two slightly different algorithms depending on whether $k$ is odd or even.

\paragraph{Algorithm for $k$ odd.}
Let $k = 2\ell + 1$ for $\ell\geq 1$.
\begin{itemize}
    \item \textbf{Phase 1: Initializing clusters}

    We first include all edges incident to vertices of degree at most $d_1 = s^{1-2/k}n^{1/k}$. Let $\mathcal C_0$ be a sample of vertices drawn independently by the coordinator with probability $\log n/d_1$ each. We will think of these vertices as cluster centers. This can be broadcasted to each of the servers. Now for each vertex, if it is adjacent to a cluster center, each server sends such an edge to the coordinator. This creates a set of $\tilde O(n/d_1)$ clusters, each of radius $1$.

    \item \textbf{Phase 2: Expanding clusters}

    Let $d_2 = n^{1/k}/s^{2/k}$. For $\ell-1$ iterations from $i = 1,\dots,\ell-1$, we expand the clusters by one layer at a time as follows. Sample a set of clusters $\mathcal C_i\subseteq\mathcal C_{i-1}$ independently with probability $\log n/d_2$ each. Now for each vertex, we check if it is adjacent to some sampled cluster $C\in\mathcal C_i$ or not by having each server send a bit for each vertex. If some server indicates that the vertex is adjacent to a cluster, then we can have that server send that edge to the coordinator, and otherwise, we tell every server to add an edge to all adjacent clusters $C\in\mathcal C_{i-1}$. After iteration $i$, each cluster in $\mathcal C_i$ has radius $i+1$.

    \item \textbf{Phase 3: Connecting clusters}

    Finally, after the $\ell-1$ iterations, we add an edge between every pair of clusters in $\mathcal C_{\ell-1}$.
\end{itemize}

We first argue correctness. Let $\{u,v\}$ be a missing edge with $\deg(u), \deg(v) > d_1$. Then, both $u$ and $v$ belong to a cluster in $\mathcal C_0$ after phase 1 with high probability. We now maintain the loop invariant that $u$ and $v$ are either already well-approximated in the spanner or belongs to a cluster. Consider the $i$th iteration in phase 2. If either $u$ or $v$ are not adjacent to any sampled clusters in $\mathcal C_i$, WLOG say $u$, then $u$ is adjacent to $v$'s cluster and thus is connected to it; since $v$'s cluster has radius $i$, this yields a $2i+1\leq 2\ell-1$ factor approximation. Otherwise, both $u$ and $v$ are adjacent to some sampled cluster, and thus gets added to a cluster of radius $i+1$ in $\mathcal C_i$. At the end of the $\ell-1$ iterations, both $u$ and $v$ belong to clusters of radius $\ell$, and these clusters are adjacent since $\{u,v\}$ is an edge in the graph. Thus, we connect them in phase 3. Let $\{u',v'\}$ be this connecting edge, with $u'$ in $u$'s cluster and $v'$ in $v$'s cluster. Then, it takes at most $2\ell$ to get from $u$ to $u'$, $1$ to get from $u'$ to $v'$, and $2\ell$ to get from $v'$ to $v$, which is a total of
\begin{eqn}
    2\ell + 1 + 2\ell = 2k - 1
\end{eqn}
as desired.

We now argue the communication. In phase 1, it takes $d_1 n = \tilde O(s^{1-2/k}n^{1+1/k})$ bits of communication to send all the low degree edges and $sn$ to assign vertices to clusters in $\mathcal C_0$. In phase 2, it takes $\tilde O(sn)$ bits of communication to assign vertices to sampled clusters, and if a vertex is not adjacent to a sampled cluster, then it is adjacent to at most $d_2$ clusters and thus it takes $sd_2 = \tilde O(s^{1-2/k}n^{1/k})$ communication to connect a vertex to the clusters and thus $s^{1-2/k}n^{1/k}$ total. Finally, there are
\begin{eqn}
    \tilde O\parens*{\frac{n}{d_1d_2^{\ell-1}}} = \tilde O\parens*{\frac{n}{sd_2^\ell}} = \tilde O\parens*{\frac{ns^{2\ell/k}}{sn^{\ell/k}}} = \tilde O\parens*{\frac{n^{1 - \ell/k}}{s^{1/k}}}
\end{eqn}
clusters at the end in expectation, so it takes
\begin{eqn}
    \tilde O\parens*{s\parens*{\frac{n^{1 - \ell/k}}{s^{1/k}}}^2} = \tilde O\parens*{s^{1-2/k}n^{1+1/k}}
\end{eqn}
communication to connect them all.

\paragraph{Algorithm for $k$ even.}
Let $k = 2\ell$ for $\ell\geq 2$.
\begin{itemize}
    \item \textbf{Phase 1: Initializing clusters}

    We first include all edges incident to vertices of degree at most $d_1 = s^{1-2/k}n^{1/k}$. Let $\mathcal C_0$ be a sample of vertices drawn independently by the coordinator with probability $\log n/d_1$ each. For each vertex, add an edge to an adjacent cluster center if one exists.

    \item \textbf{Phase 2: Expanding clusters}

    Let $d_2 = n^{1/k}/s^{2/k}$. For $\ell-1$ iterations from $i = 1,\dots,\ell-1$, we expand the clusters by one layer at a time as follows. Sample a set of clusters $\mathcal C_i\subseteq\mathcal C_{i-1}$ independently with probability $\log n/d_2$ each. Now for each vertex, if it is adjacent to a sampled cluster $C\in\mathcal C_i$, then add an edge to it, and otherwise, add an edge to all adjacent clusters $C\in\mathcal C_{i-1}$. After iteration $i$, each cluster in $\mathcal C_i$ has radius $i+1$.

    \item \textbf{Phase 3: Connecting clusters}

    Finally, after the $\ell-1$ iterations, we add an edge between every pair of clusters in $\mathcal C_{\ell-1}$ and $\mathcal C_{\ell-2}$.
\end{itemize}

Correctness and communication for phases 1 and 2 are similar to before, so we just show the communication for phase 3. In expectation, there are $\tilde O\parens*{n/d_1d_2^{\ell-1}}$ clusters in $\mathcal C_{\ell-1}$ and $n/d_1d_2^{\ell-2}$ clusters in $\mathcal C_{\ell-2}$, so it takes
\begin{eqn}
\tilde O\parens*{s\frac{n}{d_1d_2^{\ell-1}}\frac{n}{d_1d_2^{\ell-2}}} = \tilde O\parens*{s\frac{n^2}{s^2d_2^{2\ell-1}}} = \tilde O\parens*{s\frac{n^2s^{2(2\ell-1)/k}}{s^2n^{(2\ell-1)/k}}} = \tilde O\parens*{s^{1-2/k}n^{1+1/k}}
\end{eqn}
communication to connect the pairs.
\end{proof}

\section{Two Player Multiplicative \texorpdfstring{$(2k-1)$}{(2k-1)}-Approximate Distance Oracle Lower Bound}\label{section:dist-oracle-lb}
We give the details for the proof of \cref{thm:2-player-mult}.
\begin{proof}[Proof of \cref{thm:2-player-mult}]
We follow the ideas in the approximate distance oracle space lower bound, Proposition 5.1, of \cite{tz05}. Let $Z\subseteq G_n$ be any subgraph and let $O_Z$ be a multiplicative $(2k-1)$-distance oracle for it. Now consider any edge $\{v,w\}\in G_n$. If $\{v,w\}\in Z$, then $O_Z(v,w)\leq 2k-1$, while if $\{v,w\}\notin Z$, then $O_Z(v,w)\geq 2k+1$ since the girth of $G_n$ is $2k+2$. Thus, a the distance oracle $O_Z$ uniquely determines $Z$.

Now consider the distribution that samples a subgraph $Z$ of $G_n$ uniformly at random, assigns all its edges to a non-coordinator player $A$, and assigns nothing to the coordinator $B$. Consider a randomized protocol that fails with probability $1/3$, with transcript $\Pi(Z)$ of the communication between $A$ and $B$. Note that the probability that the output $O_Z$ fails to identify $Z$ is just the probability that the protocol fails. Then by Fano's inequality,
\begin{eqn}
    H(Z\mid \Pi(Z))\leq h\parens*{\frac13} + \frac13\log(2^m-1)\leq  \frac12m.
\end{eqn}
By Proposition 4.3 of \cite{Bar-YossefJKS02}, the randomized communication complexity is at least the mutual information $I(Z;\Pi(Z))$, which gives a lower bound of
\begin{eqn}
    I(Z;\Pi(Z)) = H(Z) - H(Z\mid \Pi(Z))\geq m - \frac12 m = \frac12 m = \Omega(n^{1+1/k})
\end{eqn}
as desired.
\end{proof}

\section{Baswana-Sen Multiplicative Spanner in the Dynamic Streaming Model}\label{section:spanner-dynamic}
We give the details for \cref{thm:spanner-dynamic}. We will use the $\ell_0$-sampler, a standard primitive in the streaming literature which gives us access to uniform sampling:
\begin{theorem}[\cite{JowhariST11}]
There is an algorithm in the dynamic streaming model for sampling a uniformly random nonzero entry of the underlying vector $x$ that errs with probability at most $\delta$ and uses $O(\log^2 n\log \delta^{-1})$ space.
\end{theorem}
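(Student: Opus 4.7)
The plan is to implement the cluster-cluster joining variant of the Baswana-Sen algorithm described in Section~\ref{section:bs-mult-spanner} (specialized to the single-stream setting $s=1$) in the dynamic streaming model, using $\ell_0$-samplers as the main primitive for coping with edge insertions and deletions. The algorithm naturally partitions into Phase~1 (initializing radius-$1$ clusters), $\floor{k/2} - 1$ iterations of Phase~2 (cluster expansion by one layer), and Phase~3 (connecting pairs of final clusters). Each of these phases will be executed in exactly one pass through the stream, giving a total of $1 + (\floor{k/2} - 1) + 1 = \floor{k/2} + 1$ passes, matching the claimed bound.

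In the first pass, I would maintain vertex degrees along with $\tilde O(n^{1/k})$ fresh $\ell_0$-samplers per vertex on its incident edges, for a total of $\tilde O(n^{1+1/k})$ space. At the end of the pass, the samplers let me recover (i) all edges incident to vertices of degree at most $n^{1/k}$, which I include directly in the spanner, and (ii) for each remaining vertex, a single incident edge to the randomly sampled cluster-center set $\mathcal{C}_0$ (if one exists). The latter is possible because $\ell_0$-sampler sketches are \emph{linear} and can thus be restricted, post-hoc, to any specified subset of edge coordinates such as those incident to $\mathcal{C}_0$. For each of the $\floor{k/2} - 1$ subsequent Phase~2 passes, the current clustering is already determined from the previous pass, so I sample the next sub-clusters $\mathcal{C}_i \subseteq \mathcal{C}_{i-1}$ at the start of the pass and maintain $\ell_0$-sampler sketches per cluster on its outgoing edges, with enough samplers per cluster to cover all of its at most $\tilde O(n^{1/k})$ neighboring clusters at the current level. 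At the end of the pass, for each cluster not adjacent to any sampled cluster in $\mathcal{C}_i$ I extract one edge to each neighboring cluster in $\mathcal{C}_{i-1}$, and otherwise a single edge to a sampled cluster in $\mathcal{C}_i$. The geometric decrease in the cluster count established in Section~\ref{section:bs-mult-spanner} keeps the per-pass space at $\tilde O(n^{1+1/k})$. The final Phase~3 pass maintains one $\ell_0$-sampler per pair of clusters to be joined, of which there are $\tilde O(n^{1+1/k})$ by the same analysis.

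The main technical obstacle is making the samplers queryable on \emph{adaptive} edge-subsets (determined by the clustering computed in earlier passes) within the space budget, especially in Phase~2 where a cluster may need to connect to many distinct neighboring clusters. This is resolved by exploiting the linearity of $\ell_0$-sampler sketches, which allows us to add, subtract, or restrict the maintained sketches to any desired coordinate subset at query time, together with choosing the number of samplers per cluster to match the expected number of neighboring clusters at each level. Correctness of the returned subgraph as a multiplicative $(2k-1)$-spanner then follows directly from the argument in Section~\ref{section:bs-mult-spanner}, since the streaming implementation produces edges from the same distribution as the distributed algorithm there.
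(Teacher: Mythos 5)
You have proved the wrong statement. The theorem you were asked to address is the $\ell_0$-sampler result of Jowhari, Sa\u{g}lam, and Tardos~\cite{JowhariST11}: given a vector $x$ under turnstile updates, one can return a uniformly random index from its support using $O(\log^2 n\log\delta^{-1})$ bits of space with failure probability at most $\delta$. In the paper this is an imported external primitive and is not proved at all; it is merely cited. What you have written instead is a proof sketch of \cref{thm:spanner-dynamic} --- the dynamic-stream implementation of the Baswana--Sen cluster-cluster joining algorithm in $\tilde O(n^{1+1/k})$ space and $\floor{k/2}+1$ passes. That result \emph{uses} the $\ell_0$-sampler as its main subroutine, so your argument presupposes the very statement you were supposed to establish and cannot serve as a proof of it.

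If you wanted to actually prove the cited $\ell_0$-sampler theorem, the structure of the argument is entirely different: one hashes the universe $[N]$ into $O(\log N)$ nested subsampling levels (level $j$ retaining each coordinate independently with probability $2^{-j}$); at each level one maintains a small sketch capable of exact $1$-sparse recovery together with a fingerprint or AMS-type test that certifies whether the restricted vector really is $1$-sparse; one returns the recovered coordinate at the first level whose certificate passes, and amplifies to confidence $\delta$ by $O(\log\delta^{-1})$ independent repetitions. The exactly-$1$-sparse recovery plus verification at each level costs $O(\log N\log\delta^{-1})$ bits, and there are $O(\log N)$ levels, giving the stated $O(\log^2 n\log\delta^{-1})$ bound with $N = \binom{n}{2}$. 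As for your sketch of \cref{thm:spanner-dynamic} itself, it is broadly aligned with the paper's approach (including the $\floor{k/2}+1$-pass count and the use of linearity to restrict sketches post-hoc), but that is not the theorem under review.
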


Using this, we have the following:
\begin{proof}[Proof of \cref{thm:spanner-dynamic}]
Following the cluster-cluster joining algorithm of \cite{baswana2007simple}, we give two slightly different algorithms depending on whether $k$ is odd or even.

Let $k = 2\ell + 1$ for $\ell\geq 1$. All our $\ell_0$-samplers will have failure probability $n^{-3}$.
\begin{itemize}
    \item \textbf{Phase 1: Initializing clusters}

    We first sample a set of vertices $\mathcal C_0$ independently with probability $n^{-1/k}$. Then, on the first pass, we use $n$ copies of $\ell_0$-samplers, one for each vertex $u$, to sample an edge between $u$ and $\mathcal C_0$. If such an edge exists, we include $u$ in this cluster. For each vertex, we use an additional $O(n^{1+1/k}\log n)$ copies of $\ell_0$-samplers that sample random neighbors of the vertex.

    \item \textbf{Phase 2: Expanding clusters}

    For $\ell-1$ iterations from $i = 1,\dots,\ell-1$, we expand the clusters by one layer at a time as follows. Sample a set of clusters $\mathcal C_i\subseteq\mathcal C_{i-1}$ independently with probability $n^{-1/k}$ each. We then use $n$ copies of $\ell_0$-samplers to sample an edge between each vertex $u$ and a sampled cluster. If such an edge exists, we include $u$ in this cluster. For each vertex, we use an additional $O(n^{1+1/k}\log n)$ copies of $\ell_0$-samplers that sample random adjacent clusters of the vertex.

    \item \textbf{Phase 3: Connecting clusters}

    Finally, after the $\ell-1$ iterations, we include an edge between every pair of clusters in $\mathcal C_{\ell-1}$, using an $\ell_0$-sampler for each pair.
\end{itemize}

\paragraph{Algorithm for $k$ even.}
Let $k = 2\ell$ for $\ell\geq 2$.
\begin{itemize}
    \item \textbf{Phase 1: Initializing clusters}

    We first sample a set of vertices $\mathcal C_0$ independently with probability $n^{-1/k}$. Then, on the first pass, we use $n$ copies of $\ell_0$-samplers, one for each vertex $u$, to sample an edge between $u$ and $\mathcal C_0$. If such an edge exists, we include $u$ in this cluster. For each vertex, we use an additional $O(n^{1+1/k}\log n)$ copies of $\ell_0$-samplers that sample random neighbors of the vertex.

    \item \textbf{Phase 2: Expanding clusters}

    For $\ell-1$ iterations from $i = 1,\dots,\ell-1$, we expand the clusters by one layer at a time as follows. Sample a set of clusters $\mathcal C_i\subseteq\mathcal C_{i-1}$ independently with probability $n^{-1/k}$ each. We then use $n$ copies of $\ell_0$-samplers to sample an edge between each vertex $u$ and a sampled cluster. If such an edge exists, we include $u$ in this cluster. For each vertex, we use an additional $O(n^{1+1/k}\log n)$ copies of $\ell_0$-samplers that sample random adjacent clusters of the vertex.

    \item \textbf{Phase 3: Connecting clusters}

    Finally, after the $\ell-1$ iterations, we include an edge between every pair of clusters in $\mathcal C_{\ell-1}$, and $\mathcal C_{\ell-2}$ using an $\ell_0$-sampler for each pair.
\end{itemize}

We refer to \cite{baswana2007simple} for a proof of correctness. The space usage is only amplified by the use of $\ell_0$-samplers, which increases our bound by a $\log^3 n$ factor. The total number of passes in both cases is $\ell+1 = \floor{k/2}+1$, as claimed.
\end{proof}
\

\end{document}